\theoremstyle{remark}
\newcommand\ASTART{\bigskip\noindent\begin{minipage}[b]{0.5\linewidth}}
	\newcommand\AENDSKIP{\end{minipage}\bigskip}
\newcommand\AEND{\end{minipage}}
\let\MYoriglatexcaption\caption
\renewcommand{\caption}[2][\relax]{\MYoriglatexcaption[#2]{#2}}
\theoremstyle{plain}
\newtheorem{thm}{\textbf{Theorem}}
\newtheorem{lem}{\textbf{Lemma}}
\theoremstyle{definition}
\newtheorem{defn}{\textbf{Definition}}
\theoremstyle{remark}
\newtheorem{rem}{\bf Remark}
\newcommand*{\rom}[1]{\expandafter\@slowromancap\romannumeral #1@}
\def\change{black}
\newcommand{\RN}[1]{%
\textup{\uppercase\expandafter{\romannumeral#1}}%
}
\begin{document}
%
\title{Multi-weight Nuclear Norm Minimization for Low-rank Matrix Recovery in Presence of Subspace Prior Information}
\author{Hamideh Sadat~Fazael~Ardakani, Sajad~Daei, Farzan~Haddadi%
	\thanks{H S. Fazael Ardakani is with the School of Electrical
		and Computer Engineering, University of Tehran, Iran. S. Daei is with the Communications Department of Eurecom, Sophia Antipolis, France. F. Haddadi is with the School of Electrical Engineering, Iran University of Science \& Technology, Iran.}}
\maketitle

	\begin{abstract}
		Weighted nuclear norm minimization has been recently recognized as a technique for reconstruction of a low-rank
		matrix from compressively sampled measurements when some prior information about the column and row subspaces of the matrix is available. We derive the conditions and the associated recovery guarantees of weighted nuclear norm minimization when multiple weights are allowed. This setup could be used when one has access to prior subspaces forming multiple angles with the column and row subspaces of the ground-truth matrix. While existing works in this field use a single weight to penalize all the angles, we propose a multi-weight problem which is designed to penalize each angle independently using a distinct weight. Specifically, we prove that our proposed multi-weight problem is stable and robust under weaker conditions for the measurement operator than the analogous conditions for single-weight scenario and standard nuclear norm minimization. Moreover, it provides better reconstruction error than the state of the art methods. We illustrate our results with extensive numerical experiments that demonstrate the advantages of allowing multiple weights in the recovery procedure. Our work has beneficial implications for channel estimation in multiple-input multiple output (MIMO) wireless communications based on Frequency Division Duplexing (FDD). The existing methods for channel estimation in this application require a huge number of pilot (training) signals to estimate the downlink channel which greatly wastes the spectrum resources in massive MIMO systems. We provide a dynamic channel estimation scenario for FDD massive MIMO systems and show how our method could be applied to enhance the spectral efficiency.
	\end{abstract}
\begin{IEEEkeywords}
	Nuclear norm minimization, Subspace prior information, Frequency Division Duplexing, massive MIMO, Spectral efficiency, Non-uniform weights, Restricted isometry property.
\end{IEEEkeywords}

%
\IEEEpeerreviewmaketitle

\section{Introduction}\label{introduction}
In many applications such as channel estimation in wireless communication \cite{FDD}, MRI \cite{haldar2010spatiotemporal},\cite{zhao2010low}, quantum state tomography \cite{gross2010quantum}, collaborative filtering \cite{srebro2010collaborative}, Netflix problem \cite{bennett2007netflix} and exploration seismology \cite{aravkin2014fast}, we are interested in recovering a low-rank matrix $\bm{X} \in \mathbb{R}^{n\times n}$ with rank $r$ from linear noisy measurements $ \bm{y} = \mathcal{A}(\bm{X}+\bm{E})\in \mathbb{R}^{p}$ by solving the following problem:
\begin{align}\label{minrank}
	&\min_{\bm{Z} \in \mathbb{R}^{n\times n}}~{\rm{rank}}(\bm{Z}) \nonumber \\
	&\mathrm{s.t.}~\|\bm{y} - \mathcal{A}(\bm{Z})\|_2\le e ,
\end{align}
where  $ \mathcal{A}:\mathbb{R}^{n\times n} \rightarrow \mathbb{R}^{p} $ is the linear measurement operator \footnote{All conclusions in this work are reasonable for non-square matrices, though without loss of generality we consider square matrices.}, $\bm{E}$ is the noise matrix ,  $\bm{Z}$ represents the matrix variable of the optimization problem and $e$ is an upper-bound for $\|\mathcal{A}(\bm{E})\|_2$.
The latter problem is NP-hard, so the common approach is to solve the surrogate convex problem 
\begin{align}\label{minnuclear}
	&\min_{\bm{Z} \in \mathbb{R}^{n\times n}}~\|\bm{Z}\|_{*} \nonumber \\
	&\mathrm{s.t.}~\|\bm{y} - \mathcal{A}(\bm{Z})\|_2\le e,
\end{align}
where $ \|\cdot\|_{*} $ is the nuclear norm \cite{recht2010guaranteed}. It was shown in \cite{recht2010guaranteed} that if $\mathcal{A}$ satisfies the rank restricted isometry property (R-RIP), then the problem \eqref{minnuclear} can (approximately) recover $ \bm{X} $. In many applications, some prior information about the ground-truth subspaces (i.e. the column and row subspaces of the ground-truth matrix $\bm{X}$) is available. In Netflix problem, prior evaluations of the movies might be available. In sensor network localization \cite{so2007theory}, previous positions might be available. We consider this prior information as two $r'$-dimensional subspaces $\widetilde{\bm{\mathcal{U}}}_{r^{\prime}}$ and $ \widetilde{\bm{\mathcal{V}}}_{r^{\prime}} $ forming angles with column and row spaces of the ground-truth  matrix $\bm{X}$, respectively. To incorporate this prior information into the recovery procedure,
we propose the following problem for low-rank matrix recovery:
\begin{align}\label{ourproblem}
	&\min_{\bm{Z} \in \mathbb{R}^{n\times n}} ~\|\bm{Q}_{\widetilde{\bm{\mathcal{U}}}_{r^{\prime}}} \bm{Z} \bm{Q}_{\widetilde{\bm{\mathcal{V}}}_{r^{\prime}}}\|_{*} \nonumber \\
	&\mathrm{s.t.}~\|\bm{y} - \mathcal{A}(\bm{Z})\|_2\le e,
\end{align}
where

\begin{align}\label{QUQV}
	&\bm{Q}_{\widetilde{\bm{\mathcal{U}}}_{r^{\prime}}} := \widetilde{\bm{U}}_{r^{\prime}}\bm{\Lambda}\widetilde{\bm{U}}_{r^{\prime}}^{\rm{H}} +\bm{P}_{\widetilde{\bm{\mathcal{U}}}^{\perp}_{r^{\prime}}}	 \nonumber \\
	&\bm{Q}_{\widetilde{\bm{\mathcal{V}}}_{r^{\prime}}} := \widetilde{\bm{V}}_{r^{\prime}}\bm{\Gamma}\widetilde{\bm{V}}_{r^{\prime}}^{\rm{H}} + \bm{P}_{\widetilde{\bm{\mathcal{V}}}^{\perp}_{r^{\prime}}},
\end{align}
and $ \bm{\Lambda} $ and $ \bm{\Gamma} $ are diagonal matrices with its entries in the interval $[0,1]$, $\widetilde{\bm{U}}_{r^{\prime}}\in\mathbb{R}^{n\times r'}$ and $\widetilde{\bm{V}}_{r^{\prime}}\in\mathbb{R}^{n\times r'}$ are some bases of the subspaces  ${\widetilde{\bm{\mathcal{U}}}_{r^{\prime}}}$ and ${\widetilde{\bm{\mathcal{V}}}_{r^{\prime}}}$, respectively. $\bm{P}_{\widetilde{\bm{\mathcal{U}}}^{\perp}_{r^{\prime}}}$ and $\bm{P}_{\widetilde{\bm{\mathcal{V}}}^{\perp}_{r^{\prime}}}$ are orthogonal projection matrices onto the complement subspaces $\widetilde{\bm{\mathcal{U}}}^{\perp}_{r^{\prime}}$ and $\widetilde{\bm{\mathcal{V}}}^{\perp}_{r^{\prime}}$, respectively defined by $\bm{P}_{\widetilde{\bm{\mathcal{U}}}^{\perp}_{r^{\prime}}} := \bm{I} - \widetilde{\bm{U}}_{r^{\prime}} \widetilde{\bm{U}}_{r^{\prime}}^{H},$ and $\bm{P}_{\widetilde{\bm{\mathcal{V}}}^{\perp}_{r^{\prime}}} := \bm{I} - \widetilde{\bm{V}}_{r^{\prime}} \widetilde{\bm{V}}_{r^{\prime}}^{H}$.

The problem \eqref{ourproblem} reduces to \eqref{minnuclear} when  $ \bm{\Lambda} = \bm{\Gamma} = \bm{I}_{r^{\prime}} $. The values of $\bm{\Lambda}$ and $\bm{\Gamma}$ depend on the accuracy of prior information for each direction (e.g. each column of $\widetilde{\bm{U}}_{r^{\prime}}$ ) in the form of principal angles \footnote{ See Section \ref{section.Existing results} for a detailed definition of principal angles}. Whenever a principal angle increases, the accuracy of the corresponding direction decreases, and therefore the weight being assigned to that direction shall intuitively be large and near $ 1 $.

	This prior information is accessible in many applications \cite{srebro2010collaborative,FDD,aravkin2014fast,netflix,shen2016compressed}. For example, in wireless communication systems based on frequency division duplex (FDD), the base station (BS) equipped with multiple antennas transmits a few pilots (training signal) to the single-antenna users in the downlink and each user estimates its own channel in a coherence time-bandwidth block based on this observation and feedbacks the estimates to the BS \cite{chen2020massive}-\cite{gao2015structured}. 
	The number of required pilots grows linearly with the number of BS antennas. Hence, in massive multiple-input multiple output (MIMO) systems, the overhead incurred by pilot signaling imposes a serious concern and becomes highly challenging. The low-rank structure of the channel matrix between users and BS (which is the result of few scatterers in the communication path and is the case in millimeter wave systems \cite{chen2020massive,liang2019semi,shen2016compressed}) can help to reduce the number of required pilots via exploiting low-rank matrix recovery \eqref{minnuclear}. However, further reducing the pilot overhead (which could be also translated to further enhancing the spectral efficiency) is possible by leveraging additional information coming from previous coherent time-bandwidth blocks. Specifically, the associated Doppler frequency of channel specifies a maximum level of dissimilarity between the matrix channels at consecutive coherence blocks. In simple words, the angles between column/ row subspaces of the ground-truth channel matrix (say e.g. $\bm{\mathcal{U}}$/ $\bm{\mathcal{V}}$) and the column/row subspaces of channel matrix in a previous coherence block (represented by $\widetilde{\bm{\mathcal{U}}}$/$\widetilde{\bm{\mathcal{V}}}$) could be approximately estimated in advance. This extra information can help to further enhance the spectral efficiency and reduce the number of required pilots. Our work here provides a method to use this information by solving \eqref{ourproblem} and by designing optimal weights $\bm{\Lambda}$ and $\bm{\Gamma}$. We will disclose more details of this application in Section \ref{sec.fdd} and will explain how our method could be applied to further reduce the number of required pilots which in turn enhances the spectral efficiency.
%



\subsection{Contributions}
In this paper, we propose a general problem for low-rank matrix recovery with prior subspace information. For a fixed linear operator, we guarantee that our method outperforms the existing methods in \cite{eftekhari2018weighted} and \cite{recht2010guaranteed}, in terms of the estimation error, since we penalize the inaccuracy of each basis (direction) in the prior subspace, distinctly. We derive an RIP condition for the measurement operator in this multi-weight weighted matrix recovery that is weaker than its single-weight counterpart. Then, we obtain the optimal weights that make the condition as weak as possible.

\subsection{Related Works and Key Differences}\label{Related Works}
In this section, we summarize the existing approaches for recovering low-rank matrix from linear measurements. 
The authors in \cite{rao2015collaborative} provide a weighted version of trace-norm regularization that works better than the unweighted version: 
\begin{align}\label{tracenorm}
	\|\bm{X}\|_{\rm tr} := \| {\rm diag}(\sqrt{\bm{p}}) \bm{X} {\rm diag}(\sqrt{\bm{q}}) \|_{*},
\end{align}
where $p(i)$ and $q(j)$ are the probabilities of the $ i $-th row and $ j $-th column of the matrix being observed, respectively.

In \cite{angst2011generalized}, \cite{jain2013provable} and \cite{xu2013speedup}, prior information is used to penalize the directions in row and column spaces of $ \bm{X} $. 
In \cite{mohan2010reweighted}, the authors consider re-weighted trace norm minimization problem as an iterative heuristic and analyze its convergence.
In \cite{rao2015collaborative}, a generalized nuclear norm from \cite{srebro2010collaborative} is used to provide a scalable algorithm based on \cite{zhou2012kernelized} with structural prior information for matrix recovery.

Aravkin et al. in \cite{aravkin2014fast} were the first team that incorporated prior subspace information into low-rank matrix recovery using an iterative algorithm to solve the following problem:
\begin{align}\label{arminproblem}
	&\min_{\bm{Z} \in \mathbb{R}^{n\times n}} ~\|\bm{Q}_{\widetilde{\bm{\mathcal{U}}}_{r}} \bm{Z} \bm{Q}_{\widetilde{\bm{\mathcal{V}}}_{r}}\|_{*} \nonumber \\
	&\mathrm{s.t.}~\bm{y} = \mathcal{A}(\bm{Z}),
\end{align}
where

\begin{align}\label{QUQVone}
	&\bm{Q}_{\widetilde{\bm{\mathcal{U}}}_{r}} := \lambda\bm{P}_{\widetilde{\bm{\mathcal{U}}}_{r}} +\bm{P}_{\widetilde{\bm{\mathcal{U}}}^{\perp}_{r}}	 \nonumber \\
	&\bm{Q}_{\widetilde{\bm{\mathcal{V}}}_{r}} := \gamma \bm{P}_{\widetilde{\bm{\mathcal{V}}}_{r}} + \bm{P}_{\widetilde{\bm{\mathcal{V}}}^{\perp}_{r}},
\end{align}
and  $ \lambda $ and $ \gamma $ depend on the maximum principal angle.The intuition behind using $\bm{Q}_{\widetilde{\bm{\mathcal{U}}}_{r}}$ and $\bm{Q}_{\widetilde{\bm{\mathcal{V}}}_{r}}$ is forming an objective function that promotes both rank and additional prior subspace information.

Eftekhari et al. in \cite{eftekhari2018weighted} uses the problem \eqref{arminproblem} and proves that the isometry constant of the linear operator for robust matrix recovery is weaker in the presence of prior information.
In \cite{ardakani2019greedy}, a greedy method is provided to solve rank minimization problem according to \eqref{arminproblem}. The prior subspace information in \cite{ardakani2019greedy} might be close or far from the ground-truth subspaces in contrast to \cite{eftekhari2018weighted} and \cite{aravkin2014fast} where prior subspaces must be close to the ground-truth subspaces.

Another work with the same model as \eqref{arminproblem} is \cite{daei2018optimal}, which uses statistical dimension theory to obtain optimal weights that minimize the required number of measurements in contrast to other works that maximize the RIP bound. In a closely related field known as compressed sensing(CS)\cite{donoho2006compressed,candes2008restricted,daei2019error,daei2019living}, Needell et al. in \cite{needell2017weighted} provide recovery conditions for weighted $ \ell_1 $-minimization when multiple prior information about the support of a sparse signal is available. This prior information appears in the form of multiple sets where each contributes to the support with a certain accuracy and non-uniform weights are assigned to these sets. It is worth noting that the terms "non-uniform weights" refers to the multiple distinct weights; we will use both terms in this paper. In another CS-related work applied to downlink channel estimation in FDD massive MIMO, \cite{lu2019compressive} proposes a weighted $\ell_p$ minimization to estimate the sparse channel and chooses the weights based on the previous obtained channel support. However, besides different setup with the considered model of our work, the method of choosing the weights is not optimal. Our work in this paper is actually an extension of \cite{needell2017weighted} to the matrix recovery case. We use non-uniform weights to penalize different directions of the ground-truth matrix. We should point out that our used tools and analysis substantially differ from those in \cite{needell2017weighted}. In particular, while we obtain the optimal weights that maximize the RIP bound (alternatively make the RIP condition as weak as possible), the weights in \cite{needell2017weighted} are chosen in a heuristic way.   
\subsection{Outline and Notations}
The paper is organized as follows:In Section \ref{section.Existing results}, we review the results on weighted nuclear norm minimization \eqref{ourproblem} with a single weight. In Section \ref{ourwork}, we present a generalized and improved theory of non-uniform weighted nuclear norm minimization. {\color{\change} As one typical application of the proposed method, we explain the logic linking between low-rank matrix recovery and channel estimation in FDD massive MIMO in Section \ref{sec.fdd}}. Numerical results is provided in Section \ref{section.simulation}. Finally the paper is concluded in Section \ref{conclusion}.

Throughout the paper, scalars are indicated by lowercase letters, vectors by lowercase boldface letters, and matrices by uppercase letters. The trace and Hermitian of a matrix are shown as $\text{Tr}(\cdot)$ and $(\cdot)^{\rm{H}}$, respectively. The Ferobenius inner product is defined as $\langle \bm{A}, \bm{B} \rangle_{F}= \text{Tr}(\bm{A}\bm{B}^{\rm{H}})$. 
$\| .\|$ denote the spectral norm and $ \bm{X} \succcurlyeq 0$ means that $ \bm{X} $ is a semidefinite matrix.  
%
We describe the linear operator $\mathcal{A}:\mathbb{R}^{m \times n} \rightarrow \mathbb{R}^{p}$ as $$\mathcal{A}(\bm{X}) =[\langle \bm{X},\bm{A}_{1}\rangle _{F} ,\dots, \langle \bm{X},\bm{A}_{p}\rangle _{F}]^{\rm T} $$ where $\bm{A}_{i} \in \mathbb{R}^{m \times n }$. The adjoint operator of $\mathcal{A}$ is defined as $ \mathcal{A}^{*}\bm{y} = \sum_{i=1}^{p}y_{i}\bm{A}_{i}$ and $ \mathcal{I} $ is the identity linear operator i.e. $\mathcal{I}\bm{X}= \bm{X}$. 

The orthogonal projection matrices onto the subspaces $\bm{\mathcal{U}}$ and $\bm{\mathcal{U}}^{\perp}$ are shown by $\bm{P}_{\bm{\mathcal{U}}} := \bm{{U}}\bm{{U}}^{\rm{H}},$ and $\bm{P}_{\bm{\mathcal{U}}^{\perp}} := \bm{I} - \bm{P}_{\bm{\mathcal{U}}},$
where $\bm{U}$ is a basis for the subspace $\bm{\mathcal{U}}$ and $\bm{I}$ is the identity matrix.

\section{Single Weight Nuclear Norm Minimization}\label{section.Existing results} 
{\color{\change}This section is provided in order to show the performance of the single-weighted strategy provided in \cite{eftekhari2018weighted} and to highlight the amount of improvements compared to the regular nuclear norm minimization in the presence of prior subspace information.} Recht et al. in \cite{recht2010guaranteed} show that the nuclear norm minimization \eqref{minnuclear} robustly recover $ \bm{X} $ with noisy measurements as long as the linear operator $  \mathcal{A} $ satisfies the RIP condition defined below.

\begin{defn}
	For constant $\delta_{r}(\mathcal{A})\in (0,1]$, a linear operator $\mathcal{A}$ satisfies RIP condition if
	\begin{align}\label{eq.RIP_A}
		(1-\delta_{r}(\mathcal{A}))\|\bm{X}\|_{F} \leq \|\mathcal{A}(\bm{X})\|_{2} \leq (1+\delta_{r}(\mathcal{A}))\|\bm{X}\|_{F}
	\end{align}
	holds for every $\bm{X}$ with $ {\rm{rank}}(\bm{X}) \leq r$.
\end{defn} 
Almost all linear operators satisfy RIP condition if the number of measurements is sufficiently large. For example, a linear operator with independent  Gaussian entries with zero-mean and variance $ 1/p $ satisfies RIP condition with high probability when $ p \ge rn \log n/\delta^2_{r}(\mathcal{A}) $.

First, we explain principal angles between subspaces $\bm{\mathcal{U}}$  and $\widetilde{\bm{\mathcal{U}}}  $ with $r:={\rm dim}(\bm{\mathcal{U}}) \le {\rm dim}(\widetilde{\bm{\mathcal{U}}})=:r'	$. There are $ r $ non-increasing principal angles $\mathbf{\bm{\theta}}_{u}  \in [0^{\degree} , 90^{\degree}]^{r}$
\begin{align}
	&\mathbf{\theta}_{u}(i) =\min \Big\{ \cos^{-1} \left( \frac{|\langle \bm{u} , \widetilde{\bm{u}} \rangle|}{\|\bm{u}\|_2 \|\widetilde{\bm{u}} \|_2} \right) \quad : \quad ~ \bm{u} \in \bm{\mathcal{U}}, ~ \widetilde{\bm{u}} \in \widetilde{\bm{\mathcal{U}}} ,\nonumber \\
	&\bm{u} \perp \bm{u}_j, ~ \widetilde{\bm{u}} \perp \widetilde{\bm{u}}_j  ~ : \quad  \forall j \in \{i+1,\dots,r\} \Big\}
\end{align} 
where $ \bm{u}  $ and $ \widetilde{\bm{u}} $ are called principal vectors and the maximum principal angle is denoted by $ \mathbf{\theta}_{u}(1)$ \cite{daei2018optimal}. 
\begin{thm} \cite{eftekhari2018weighted} \label{theoremEftekhari}
	Let $\bm{X}_r = \bm{U}_{r}\bm{\Sigma}_{r}\bm{V}_{r}^{\rm H} \in \mathbb{R}^{n \times n }$ for an integer $r \le n $ be a truncated SVD from $\bm{X} \in \mathbb{R}^{n \times n}$ and consider the residual $\bm{X} _{r^+} = \bm{X} - \bm{X}_r $. Suppose that $\widetilde{\bm{\mathcal{U}}}_{r} $ and $\widetilde{\bm{\mathcal{V}}}_{r}$ are prior subspace information about $\bm{\mathcal{U}}_{r}= {\rm span}(\bm{X}_r)$ and  $\bm{\mathcal{V}}_{r}={\rm span}(\bm{X}_r^{\rm H})$. Assume that the linear operator $ \mathcal{A} $ satisfies RIP condition with 
	\begin{align}\label{RIP-CONDITION-1}
		\delta_{32r}(\mathcal{A}) \le \frac{0.9 - \max\{\alpha_1,\alpha_2\}/\sqrt{30}}{0.9 + \max\{\alpha_1,\alpha_2\}/\sqrt{30}}.
	\end{align}
	Then, for weights $ \lambda $ and $ \gamma $, the solution $ \widehat{\bm{X}} $ of \eqref{arminproblem} with noisy measurements $ \bm{y} = \mathcal{A}(\bm{X}+\bm{E})\in\mathbb{R}^{p}$ satisfies:
	\begin{align}
		\| \widehat{\bm{X}}  - \bm{X}\|_F \le \frac{\| \bm{X} _{r^+} \|_{*}}{\sqrt{r}} + e,
	\end{align}
	where $\|\mathcal{A}(\bm{E}) \|_2  \le e $ and $\alpha_1$, $\alpha_2$ are 
	\begin{align}
		& \alpha_1 := \nonumber \\ 
		&\sqrt{ \frac{\lambda^4 \cos^2 \mathbf{\theta}_{u}(1) + \sin^2 \mathbf{\theta}_{u}(1) }{\lambda^2 \cos^2 \mathbf{\theta}_{u}(1) + \sin^2 \mathbf{\theta}_{u}(1) }} + \sqrt{ \frac{\gamma^4 \cos^2 \mathbf{\theta}_{v}(1) + \sin^2 \mathbf{\theta}_{v}(1)}{\gamma^2 \cos^2 \mathbf{\theta}_{v}(1) + \sin^2 \mathbf{\theta}_{v}(1) }} \nonumber \\ 
		&\alpha_2 := \nonumber \\
		& \sqrt{ \frac{2 (1-\lambda^2)\sin^2 \mathbf{\theta}_{u}(1)}{\lambda^2 \cos^2 \mathbf{\theta}_{u}(1) + \sin^2 \mathbf{\theta}_{u}(1) }} + \sqrt{ \frac{2 (1-\gamma^2)\sin^2 \mathbf{\theta}_{v}(1)}{\gamma^2 \cos^2 \mathbf{\theta}_{v}(1) + \sin^2 \mathbf{\theta}_{v}(1) }}.
	\end{align}
\end{thm}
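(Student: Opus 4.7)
The plan is to follow the standard RIP-based recovery template, adapted to the weighted nuclear norm induced by $\bm{Q}_{\widetilde{\bm{\mathcal{U}}}_r}$ and $\bm{Q}_{\widetilde{\bm{\mathcal{V}}}_r}$. Let $\bm{H} := \widehat{\bm{X}} - \bm{X}$. Feasibility of both $\bm{X}$ and $\widehat{\bm{X}}$ yields $\|\mathcal{A}(\bm{H})\|_2 \leq 2e$, and optimality of $\widehat{\bm{X}}$ in \eqref{arminproblem} gives the comparison $\|\bm{Q}_{\widetilde{\bm{\mathcal{U}}}_r}(\bm{X}+\bm{H})\bm{Q}_{\widetilde{\bm{\mathcal{V}}}_r}\|_* \leq \|\bm{Q}_{\widetilde{\bm{\mathcal{U}}}_r}\bm{X}\bm{Q}_{\widetilde{\bm{\mathcal{V}}}_r}\|_*$. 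These two ingredients have to be reconciled through RIP to bound $\|\bm{H}\|_F$.

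Next, I would introduce the tangent subspace $T$ at $\bm{X}_r$, namely the set of matrices whose column space lies in $\bm{\mathcal{U}}_r$ or whose row space lies in $\bm{\mathcal{V}}_r$, with orthogonal projector $\mathcal{P}_T$. The decomposability of the nuclear norm at $\bm{X}_r$, combined with the triangle inequality applied to the optimality relation and the tail contribution $\bm{X}_{r^+}$, would produce a weighted cone-type inequality bounding $\|\bm{Q}_{\widetilde{\bm{\mathcal{U}}}_r}\mathcal{P}_{T^\perp}(\bm{H})\bm{Q}_{\widetilde{\bm{\mathcal{V}}}_r}\|_*$ in terms of $\|\bm{Q}_{\widetilde{\bm{\mathcal{U}}}_r}\mathcal{P}_T(\bm{H})\bm{Q}_{\widetilde{\bm{\mathcal{V}}}_r}\|_*$ plus $2\|\bm{X}_{r^+}\|_*$.

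The crucial and most delicate step, which I expect to be the main obstacle, is converting this weighted cone bound into an unweighted one with constants expressed through the principal angles. Using the joint SVD of $\bm{P}_{\widetilde{\bm{\mathcal{U}}}_r}$ and $\bm{P}_{\bm{\mathcal{U}}_r}$, whose singular values are $\cos\bm{\theta}_u$, and analogously on the row side, I would derive tight two-sided comparisons between $\|\bm{Q}_{\widetilde{\bm{\mathcal{U}}}_r}\mathcal{P}_T(\bm{M})\bm{Q}_{\widetilde{\bm{\mathcal{V}}}_r}\|_*$ and $\sqrt{r}\,\|\mathcal{P}_T(\bm{M})\|_F$, and between $\|\bm{Q}_{\widetilde{\bm{\mathcal{U}}}_r}\mathcal{P}_{T^\perp}(\bm{M})\bm{Q}_{\widetilde{\bm{\mathcal{V}}}_r}\|_*$ and $\|\mathcal{P}_{T^\perp}(\bm{M})\|_*$. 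The ratios of these constants are exactly what evaluate to the stated expressions: the in-tangent piece produces the diagonal quantity $(\lambda^4\cos^2\theta+\sin^2\theta)/(\lambda^2\cos^2\theta+\sin^2\theta)$ inside $\alpha_1$, while the off-tangent leakage of the weight is responsible for the $(1-\lambda^2)\sin^2\theta$ factor inside $\alpha_2$. The bookkeeping of the cross-terms between $\widetilde{\bm{\mathcal{U}}}_r$ and $\bm{\mathcal{U}}_r$ (and between $\widetilde{\bm{\mathcal{V}}}_r$ and $\bm{\mathcal{V}}_r$) is where essentially all the delicacy lies, because the operators $\bm{Q}_{\widetilde{\bm{\mathcal{U}}}_r}$ and $\bm{Q}_{\widetilde{\bm{\mathcal{V}}}_r}$ do not commute with $\mathcal{P}_T$.

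Finally, armed with an unweighted cone bound of the shape $\|\mathcal{P}_{T^\perp}(\bm{H})\|_* \leq \alpha_1\sqrt{r}\,\|\mathcal{P}_T(\bm{H})\|_F + \alpha_2\sqrt{r}\,\|\bm{H}\|_F + 2\|\bm{X}_{r^+}\|_*$, I would partition $\mathcal{P}_{T^\perp}(\bm{H})$ into successive rank-$r$ blocks $\bm{H}_1, \bm{H}_2, \ldots$ ordered by decreasing singular-value magnitude, use the standard tail estimate $\sum_{i\geq 2}\|\bm{H}_i\|_F \leq \|\bm{H}_1\|_*/\sqrt{r}$, and apply the $\delta_{32r}(\mathcal{A})$ RIP to the low-rank matrix $\mathcal{P}_T(\bm{H})+\bm{H}_1$, estimating cross terms via $|\langle\mathcal{A}(\bm{M}_1),\mathcal{A}(\bm{M}_2)\rangle| \leq \delta\|\bm{M}_1\|_F\|\bm{M}_2\|_F$ for orthogonal low-rank $\bm{M}_1,\bm{M}_2$. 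Combining these with $\|\mathcal{A}(\bm{H})\|_2 \leq 2e$ produces a scalar inequality of the form $\bigl(c_1-c_2\max\{\alpha_1,\alpha_2\}\bigr)\|\bm{H}\|_F \leq c_3\bigl(\|\bm{X}_{r^+}\|_*/\sqrt{r}+e\bigr)$, and the RIP threshold in \eqref{RIP-CONDITION-1} is precisely what is needed to keep the coefficient on the left-hand side positive, yielding the claimed error bound.
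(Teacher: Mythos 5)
Your plan follows essentially the same route as the paper: Theorem \ref{theoremEftekhari} is quoted from \cite{eftekhari2018weighted}, and the paper's own Appendices \ref{proof theorem 2} and \ref{Nullspace Property} prove the multi-weight generalization by exactly the architecture you describe --- feasibility giving $\|\mathcal{A}(\bm H)\|_2\le 2e$, optimality giving a weighted cone (null-space) inequality, operator-norm bounds on the principal-angle-dependent block-triangular factors $\bm L,\bm R$ converting it into an unweighted cone bound whose constants reduce to $\alpha_1,\alpha_2$ for uniform weights, and a block decomposition of $\mathcal{P}_{\bm{T}^{\perp}}(\bm H)$ fed into the RIP. The one quantitative point to repair is your choice of rank-$r$ blocks: the paper partitions $\mathcal{P}_{\bm{T}^{\perp}}(\bm H)$ into rank-$\hat r$ blocks with $\hat r=30r$ (hence $\tilde r = 2r+\hat r = 32r$ and the factor $\sqrt{2r/\hat r}$, i.e.\ the $\sqrt{30}$ in \eqref{RIP-CONDITION-1}), and with rank-$r$ blocks the coefficient multiplying $\max\{\alpha_1,\alpha_2\}$ would instead be $\sqrt{2}$, which already for the unweighted case $\alpha_1=2$ makes the resulting RIP condition vacuous --- so the enlarged block size is essential to the stated constants rather than cosmetic.
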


\begin{rem}
	\label{rem:eftekhateri}	For  $ \lambda = \gamma = 1 $, problem \eqref{arminproblem} is reduced to un-weighted nuclear norm minimization \eqref{minnuclear}, which results in  $\alpha_1 = 2$, $\alpha_2 = 0 $ and $\delta_{32r}(\mathcal{A}) \le 0.42 $. This is more conservative than the result of \cite{recht2010guaranteed}, $ \delta_{5r}(\mathcal{A}) \le 0.1 $,  because $\delta_{5r}(\mathcal{A}) \le 0.05$ implies  $\delta_{32r}(\mathcal{A}) \le 0.42 $.
\end{rem}
\section{Non-Uniform Weighting}\label{ourwork}
In this section, we generalize the weighted nuclear norm minimization theory of \cite{eftekhari2018weighted} to the non-uniform weights. 
Suppose that $ \widetilde{\bm{\mathcal{U}}}_{r^{\prime}} $ and $\widetilde{\bm{\mathcal{V}}}_{r^{\prime}}$ are prior subspace information forming angles with $\bm{\mathcal{U}}_{r}$ and $\bm{\mathcal{V}}_{r}$, respectively. 
We determine the optimal weights according to the principal angles.

Our main result in Theorem \ref{our theorem} provides recovery guarantees for noisy and noiseless measurements. It also covers the uniformly weighted nuclear norm minimization.  We show that the RIP condition in non-uniformly weighted case is weaker than the uniform case. 
\begin{thm}\label{our theorem}
	Let $\bm{X}_r \in \mathbb{R}^{n \times n}$ be a rank $ r $ truncated SVD of $\bm{X}$ and $\bm{X}_{r^+} = \bm{X}-\bm{X}_r $. Also, $\bm{\mathcal{U}}_{r}= {\rm span}(\bm{X}_r)$ and $\bm{\mathcal{V}}_{r}={\rm span}(\bm{X}_r^{\rm H})$ denote the column and row subspaces of $\bm{X}_r$, with their corresponding prior subspace information denoted by $\widetilde{\bm{\mathcal{U}}}_{r^{\prime}}$ and $\widetilde{\bm{\mathcal{V}}}_{r^{\prime}}$, which are $ r^{\prime} $- dimensional subspaces. For each pair of subspaces consider the non-increasing angle vector
	\begin{align*}
		\mathbf{\bm{\theta}}_{u}= \angle[\bm{\mathcal{U}}_r,\widetilde{\bm{\mathcal{U}}}_r],~ \mathbf{\bm{\theta}}_{v}=\angle[\bm{\mathcal{V}}_r,\widetilde{\bm{\mathcal{V}}}_r],
	\end{align*} 
	which represent the accuracy of prior information.
	
	Suppose that the linear operator $ \mathcal{A}$ satisfies the RIP condition:	
	\begin{align}\label{our RIP condition}
		\delta_{32r}(\mathcal{A}) \le	\frac{1-\sqrt{\frac{1}{15} ( \alpha_3^2 + \alpha_4^2) }}{ 1 + \sqrt{ \frac{1}{15} (\alpha_3^2 + \alpha_4^2)}} 
	\end{align}
	where 
	\begin{align}
		&\alpha_3 := \max_i \sqrt{ \frac{\lambda_{1}^4(i) \cos^2 \mathbf{\theta}_{u}(i) + \sin^2 \mathbf{\theta}_{u}(i)}{\lambda_{1}^2(i) \cos^2 \mathbf{\theta}_{u}(i) + \sin^2 \mathbf{\theta}_{u}(i) }} + \nonumber \\
		& \label{eq:a3} \quad \quad \quad \quad \quad \max_i  \sqrt{ \frac{\gamma_{1}^4(i) \cos^2 \mathbf{\theta}_{v}(i) + \sin^2 \mathbf{\theta}_{v}(i)}{\gamma_{1}^2(i) \cos^2 \mathbf{\theta}_{v}(i) + \sin^2 \mathbf{\theta}_{v}(i) }} \\
		& \label{eq:a4} \alpha_4:= \max_i \sqrt{{d}_i(\bm{\theta}_{u}, \bm \lambda_{1},\bm \lambda_{2}) } + \max_i \sqrt{ {d}_i(\bm {\theta}_{v}, \bm \gamma_{1}, \bm \gamma_{2})} \\
		& {d}_1(\bm{\theta}, \bm a, \bm b) := \max_i \Big(\Big(\frac{a(i)}{\sqrt{a^2(i)\cos^2 \theta(i) + \sin^2 \theta(i)}}-1 \Big)^2  \nonumber \\
		& \label{eq:d1} \quad \quad + \frac{(1-a(i))^2\cos^2 \theta(i)\sin^2 \theta(i)}{a^2(i)\cos^2 \theta(i) + \sin^2 \theta(i)} \Big)  \\ 
		& \label{eq:d2} {d}_2(\bm{\theta}, \bm a, \bm b) := \max_i \, (b(i)-1)^2.
	\end{align}
	Then for the solution  $\widehat{\bm{X}}$  of \eqref{ourproblem} we have:
	\begin{align}\label{error bound}
		\|\widehat{\bm{X}} - \bm{X} \|_{F} \le  C_0\|\bm{X}_{r^{+}}\|_{*} + C_1e 
	\end{align}
	where 
	\begin{align}\label{cte}
		& C_0 := \frac{ \frac{ 4 }{ ( 1- \delta_{32r}(\mathcal{A})) \sqrt{30r}} }{ 1-\frac{1+ \delta_{32r}(\mathcal{A})}{1-\delta_{32r}(\mathcal{A})} \sqrt{ \frac{1}{15} ( \alpha_3^2 + \alpha_4^2 )}	} \nonumber \\ 
		& C_1:= \frac{ \frac{2}{ 1-\delta_{32r}(\mathcal{A})} \left( 1 + \sqrt{ \frac{1}{15} ( \alpha_3^2 + \alpha_4^2 )} \right) }{ 1-\frac{1+ \delta_{32r}(\mathcal{A})}{1-\delta_{32r}(\mathcal{A})} \sqrt{ \frac{1}{15} ( \alpha_3^2 + \alpha_4^2 )}	}.
	\end{align}
\end{thm}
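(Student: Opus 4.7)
The plan is to adapt the standard RIP-based nuclear-norm recovery argument (along the lines of Recht--Fazel--Parrilo \cite{recht2010guaranteed} and Eftekhari \emph{et al.}\ \cite{eftekhari2018weighted}) to the non-uniformly weighted setting. First I would set $\bm{H} := \widehat{\bm{X}} - \bm{X}$ and use the feasibility of both $\bm{X}$ and $\widehat{\bm{X}}$ in \eqref{ourproblem} together with the triangle inequality to obtain $\|\mathcal{A}(\bm{H})\|_2 \le 2e$. Next, I would exploit the optimality inequality $\|\bm{Q}_{\widetilde{\bm{\mathcal{U}}}_{r'}} \widehat{\bm{X}} \bm{Q}_{\widetilde{\bm{\mathcal{V}}}_{r'}}\|_{*} \le \|\bm{Q}_{\widetilde{\bm{\mathcal{U}}}_{r'}} \bm{X} \bm{Q}_{\widetilde{\bm{\mathcal{V}}}_{r'}}\|_{*}$ to derive a tube-type constraint relating the components of $\bm{H}$ on the tangent space $T$ at $\bm{X}_r$ (the span of matrices that share either the column or row subspace of $\bm{X}_r$) to the component on $T^{\perp}$.

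The heart of the argument is this second step. Decomposing $\bm{H} = \bm{H}_T + \bm{H}_{T^{\perp}}$ with ${\rm rank}(\bm{H}_T) \le 2r$, I would apply the triangle inequality to the weighted nuclear-norm optimality after splitting off the tail $\bm{X}_{r^+}$. The principal-angle terms enter because $\bm{Q}_{\widetilde{\bm{\mathcal{U}}}_{r'}}$ restricted to $\bm{\mathcal{U}}_r$ is not an isometry: a per-direction SVD analysis of $\bm{Q}_{\widetilde{\bm{\mathcal{U}}}_{r'}}\bm{U}_r$ yields singular values of the form $\sqrt{\lambda_1^2(i)\cos^2\theta_u(i) + \sin^2\theta_u(i)}$, and the dual weights $\bm{\lambda}_2$ act on the complementary-direction block. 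Matching the inequality direction by direction and taking the worst-case $i$ produces exactly the ratios in \eqref{eq:a3} and the distortion functions \eqref{eq:d1}, \eqref{eq:d2}; the row side contributes the analogous $\bm{\gamma}_1, \bm{\gamma}_2, \bm{\theta}_v$ terms, and the sums define $\alpha_3$ and $\alpha_4$, the non-uniform analogues of $\alpha_1, \alpha_2$ in Theorem \ref{theoremEftekhari}.

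Third, I would partition the complementary piece $\bm{H}_{T^{\perp}}$ into rank-$r$ blocks $\bm{H}_1, \bm{H}_2, \ldots$ obtained by grouping consecutive chunks of its singular values in non-increasing order, yielding the standard tail estimate $\sum_{i \ge 2} \|\bm{H}_i\|_F \le r^{-1/2} \|\bm{H}_{T^{\perp}}\|_{*}$. The head $\bm{H}_T + \bm{H}_1$ then has rank at most $32r$ (inherited from the RIP order chosen in \eqref{our RIP condition}, matching the factor $30$ that appears in the denominator of \eqref{cte}), so $\delta_{32r}(\mathcal{A})$ controls $\|\mathcal{A}(\bm{H}_T + \bm{H}_1)\|_2$ from below. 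Combining this with $\|\mathcal{A}(\bm{H})\|_2 \le 2e$, the tail bound, and the tube constraint from step two, and then solving the resulting linear inequality for $\|\bm{H}\|_F$, produces the error bound \eqref{error bound} with the explicit constants $C_0, C_1$ in \eqref{cte}; positivity of the denominator $1 - \tfrac{1+\delta_{32r}(\mathcal{A})}{1-\delta_{32r}(\mathcal{A})}\sqrt{(\alpha_3^2+\alpha_4^2)/15}$ is exactly the RIP requirement \eqref{our RIP condition}.

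The main obstacle I expect is the second step: carefully tracking how each pair of weights $(\lambda_1(i),\lambda_2(i))$ and $(\gamma_1(i),\gamma_2(i))$ interacts with its own principal angle $\theta_u(i),\theta_v(i)$, so that only the worst index in each group contributes and no loose multiplicative constants leak in. Because the uniform-weight analysis in \cite{eftekhari2018weighted} collapses the whole argument onto a single largest angle via one spectral-norm bound on $\bm{\Lambda}$, extending to non-uniform weights forces a per-direction SVD of $\bm{Q}_{\widetilde{\bm{\mathcal{U}}}_{r'}}\bm{U}_r$ and of its complementary block---exactly the content of the $d_1, d_2$ definitions---which must be glued together without losing the max-over-$i$ structure appearing in \eqref{eq:a3}, \eqref{eq:a4}.
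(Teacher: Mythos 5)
Your overall strategy is the paper's: feasibility gives $\|\mathcal{A}(\bm{H})\|_2\le 2e$, optimality of $\widehat{\bm{X}}$ in the weighted norm gives a cone (null-space) constraint whose coefficients are the worst-case per-direction distortions of $\bm{Q}_{\widetilde{\bm{\mathcal{U}}}_{r'}}$ and $\bm{Q}_{\widetilde{\bm{\mathcal{V}}}_{r'}}$, and a block decomposition of the tail plus RIP closes the loop. Two points in your sketch, however, would not deliver the stated constants as written. First, you partition $\bm{H}_{T^{\perp}}$ into rank-$r$ blocks with the tail estimate $\sum_{i\ge 2}\|\bm{H}_i\|_F\le r^{-1/2}\|\bm{H}_{T^{\perp}}\|_*$; that choice makes the head $\bm{H}_T+\bm{H}_1$ rank $3r$ and produces a factor $\sqrt{2r/r}=\sqrt{2}$ in the self-bounding inequality, not $\sqrt{1/15}$. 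The paper takes blocks of size $\hat{r}=30r$ (so the head has rank $2r+\hat{r}=32r$, matching $\delta_{32r}$, and $\sqrt{2r/\hat{r}}=\sqrt{1/15}$); you allude to the factor $30$ but your stated tail bound contradicts it, so the bookkeeping needs to be fixed before the constants $C_0,C_1$ come out.

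Second, your cone constraint relates only $\mathcal{P}_{\bm{T}}(\bm{H})$ and $\mathcal{P}_{\bm{T}^{\perp}}(\bm{H})$, but the paper's null space property is three-termed: $\|\mathcal{P}_{\bm{T}^{\perp}}(\bm{H})\|_*\le \alpha_3\|\mathcal{P}_{\bm{T}}(\bm{H})\|_*+\alpha_4\|\mathcal{P}_{\widetilde{\bm{T}}}(\bm{H})\|_*+2\|\bm{X}_{r^+}\|_*$, where $\widetilde{\bm{T}}\subset\bm{T}^{\perp}$ is an auxiliary subspace (the block pattern of $\overline{\bm{H}}'$ in \eqref{51}) on which the ``residual'' operators $\bm{L}',\bm{R}'$ act. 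This is not cosmetic: the step that turns $\alpha_3$ and $\alpha_4$ into the single quantity $\sqrt{\alpha_3^2+\alpha_4^2}$ uses $\|\mathcal{P}_{\widetilde{\bm{T}}}(\bm{H})\|_F\le\|\bm{H}_1\|_F$ (because $\bm{H}_1$ carries the dominant modes of $\mathcal{P}_{\bm{T}^{\perp}}(\bm{H})$) followed by Cauchy--Schwarz and the orthogonality $\|\bm{H}_0\|_F^2+\|\bm{H}_1\|_F^2=\|\bm{H}_0+\bm{H}_1\|_F^2$. Without introducing $\widetilde{\bm{T}}$ and this pairing, you would at best obtain $\alpha_3+\alpha_4$ in place of $\sqrt{\alpha_3^2+\alpha_4^2}$, i.e.\ a strictly stronger RIP requirement than \eqref{our RIP condition}. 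Your per-direction SVD analysis of $\bm{Q}_{\widetilde{\bm{\mathcal{U}}}_{r'}}$ restricted to the relevant blocks is exactly the content of the paper's Lemma \ref{lem 4}, so that part of the plan is sound.
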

\begin{proof}
	See Appendix \ref{proof theorem 2}.
\end{proof}
\begin{rem}
	If $\bm{\Lambda} = \bm{\Gamma} = \bm{I}_{r^{\prime}}$ then $\bm{Q}_{\widetilde{\bm{\mathcal{U}}}_{r^{\prime}}} = \bm{Q}_{\widetilde{\bm{\mathcal{V}}}_{r^{\prime}}} = \bm{I}_{n}$ and the problem reduces to the standard nuclear norm minimization in \cite{recht2010guaranteed}. Also, if  $\bm{\Lambda} = \lambda\bm{I}_{r^{\prime}}$ and $\bm{\Gamma} = \gamma\bm{I}_{r^{\prime}}$ then problem \eqref{ourproblem} reduces to \eqref{arminproblem} which is studied in \cite{eftekhari2018weighted}.
\end{rem}
\begin{rem}
	Our goal is to weaken the RIP condition in \eqref{our RIP condition}. Therefore, we choose weights that minimize $\sqrt{\alpha_3^2+\alpha_4^2}$. For non-increasing principal angles, the weights should also be non-increasing. Since the error bound in \eqref{error bound} depends on weights so another approach is to obtain optimal weights to minimize $C_0$ in \eqref{error bound}. 
\end{rem}

\begin{rem}\label{compare RIP}
	The RIP condition in \eqref{our RIP condition} is weaker than those in the single weight and the unweighted nuclear norm minimization due to the availability of higher degrees of freedom in terms of principal angles. In Table \ref{compre RIP table}, a  numerical comparison of RIP conditions for uniformly and non-uniformly weighted, and standard nuclear norm minimization is presented. Different scenarios of accurate and inaccurate subspace estimators are considered where we obtain the optimal weights by maximizing the RIP bound \eqref{our RIP condition}. As we expected, the RIP condition by using non-uniform weights is weaker than unweighted and single-weight scenarios. But for the standard problem, the RIP condition in \eqref{our RIP condition} and \eqref{RIP-CONDITION-1} is slightly more conservative than $\delta_{5r}(\mathcal{A}) \le 0.1$ in \cite{recht2010guaranteed} (see Remark \ref{rem:eftekhateri}).
\end{rem}
\begin{table*}
	\centering
	\caption{Comparison of RIP condition for the random matrix with parameters $ n = 30$, $ r = 3$ and $ r^{\prime} = 7 $}
	\scalebox{0.65}{
		\setlength\extrarowheight{3pt}
		\begin{tabular}{c|c|c|c|c|c|c}
			\hline
			$ \mathbf{\bm{\theta}}_{u} $ & $ \mathbf{\bm{\theta}}_{v} $ & $ \delta_{32r}(\mathcal{A})-{\rm Standard} \eqref{our RIP condition} $ & $ \delta_{32r}(\mathcal{A})-{\rm uniform ~ weight} \eqref{our RIP condition}$ & $  \delta_{32r}(\mathcal{A})-{\rm non-uniform ~ weight} \eqref{our RIP condition} $ & $ \delta_{32r}(\mathcal{A})-{\rm uniform ~ weight}\eqref{RIP-CONDITION-1} $ &$ \delta_{32r}(\mathcal{A})-{\rm standard} \eqref{RIP-CONDITION-1}$\\
			\hline 
			$ [2.26 , 2.98 , 3.10] $ & $ [1.91 , 2.87 , 3.40] $ & $ 0.32 $ & $ 0.46 $ & $  0.68 $ & $ 0.58 $ &$ 0.1 $ \\
			\hline
			$ [23.1 , 24.54 , 27.56] $ & $ [20.95 , 20.06 , 34.03] $ & $ 0.32  $ & $ 0.35$ & $ 0.39 $ & $ 0.20 $ &$ 0.1 $ \\
			\hline
			$ [2.10 , 21.39 , 27.07] $ & $ [3.49 , 18.17 , 24.68] $ & $ 0.32 $ & $ 0.36 $ & $ 0.40 $ & $ 0.22 $ &$  0.1 $\\
			\hline 
			$ [50.31 , 58.63 , 68.75] $ & $ [54.36 , 66.41 , 72.14]$ & $ 0.32 $ & $ 0.30 $ & $  0.32  $ & $ 0.11 $ &$ 0.1 $\\
			\hline 
	\end{tabular}}
	\label{compre RIP table}
\end{table*}

\section{FDD massive MIMO}\label{sec.fdd}
In this section, we provide the well-known system model used in FDD massive MIMO \cite{FDD},\cite{love2008overview},\cite{shen2016compressed} and illustrate how our method can be applied to this application. {\color{\change}Consider a fixed BS with $  M $ antennas and $ K $ single-antenna moving users with associated Doppler frequencies $\bm{\nu}=[\nu_1,..., \nu_K]^T$.} In FDD systems, the BS first sends a few pilots to the users, then the users estimate their own channels and feed back the estimates to the BS. {\color{\change}The channel between BS and $k$-th user is assumed quasi-statistic during $T$ time blocks and is described as \cite{liu2020angular,li2019time}:
	\begin{align}
		\label{2-fdd}
		\bm{h}_{k}=\sum_{l=1}^r\alpha_{k,l}{\rm e}^{j2\pi \nu_k T }\bm{a}(\theta_l) \in\mathbb{C}^{M},
\end{align} }
where $ r $ is the number of propagation paths, $\alpha_{k,l}$ is $l$-th complex channel amplitude, $ \theta_l $ is the angle-of-departure (AoD) of the l-th path and $\bm{a}(\theta):=[1, e^{-j2\pi \frac{D}{\lambda}\cos(\theta)},\dots,  e^{-j2\pi \frac{D}{\lambda}(M-1)\cos(\theta)}]^T$ is the steering vector, where $ D $ and $ \lambda $ are the antenna spacing at the BS and carrier wavelength, respectively. The number of paths i.e. $r$ is often very fewer than the number of BS antennas i.e. $M$. Concatenating $\bm{h}_k$s leads to the MIMO channel matrix:{\color{\change}
	\begin{align}
		\bm{H} :=  \begin{bmatrix}
			\bm{h}_1, \dots , \bm{h}_K
		\end{bmatrix}= \bm{A}\bm{G}(\bm{\nu})\in \mathbb{C}^{M \times K},
\end{align}}
{\color{\change}where $ \bm{G}(\bm{\nu}) \in \mathbb{C}^{r \times K} $ is a function of Doppler frequencies of all users $\bm{v}:=[\nu_1,...,\nu_K]$ with $G(l,k) = \alpha_{l,k} {\rm e}^{j2\pi \nu_k T}$, and}  $ \bm{A} =\begin{bmatrix} \bm{a}(\theta_1), \dots , \bm{a}(\theta_r)
\end{bmatrix} \in \mathbb{C}^{M\times r}$. We know that $ {\rm rank(\bm{H})} \le \min\{ {\rm rank}(\bm{G}) ,  {\rm rank}(\bm{A})\} $ or  $ {\rm rank}(\bm{H}) \le \min\{ M,K,r\}  $. In massive MIMO systems with massive users, the number of contributing communication paths  is much smaller than $  M $ and $ K $. Thus,  $ {\rm rank}(\bm{H})\le r  $  and $\bm{H}$ is a low-rank matrix.
After $T$-th time blocks, the received signal at the $k$-th user can be expressed as
\begin{align}\label{1-fdd}
	\bm{y}_k  = \bm{\Phi} \bm{h}_k+\bm{e}\in \mathbb{C}^{T \times 1}
\end{align}
where $ \bm{\Phi}\in\mathbb{C}^{T\times M} $
is the pilot matrix transmitted during $T$ channel uses and $\bm{e}$ is the additive Gaussian noise. It is assumed that $T$ is less than the coherence interval so that the channel is invariant during $T$ time blocks. By writing the latter equation for all $K$ users in a matrix form, we have
\begin{align}
	\label{3-fdd}
	\bm{Y} = \bm{\Phi}\bm{H}+\bm{E} \in\mathbb{C}^{T\times K},
\end{align} 
where $ \bm{Y} = \begin{bmatrix}
	\bm{y}_1^{T}, \dots , \bm{y}_K^{T}
\end{bmatrix}^T  \in \mathbb{C}^{T \times K}$. After reformulation, finding an estimate for the low-rank matrix $\bm{H}$ from $\bm{Y}$ can be recast as solving \eqref{minnuclear}. {\color{\change}To estimate the prior knowledge, consider the channel matrix at two consecutive coherence intervals where users have different Doppler frequencies:
	\begin{align}
		\bm{H}^c=\bm{A}\bm{G}_c(\bm{\nu}^c), \bm{H}^p=\bm{A}\bm{G}_p(\bm{\nu}^p)     
	\end{align}
	where $\bm{H}^c$ and $\bm{H}^p$ correspond to channel matrices in the current and previous coherence intervals, respectively. $\bm{G}_c(\bm{\nu}^c)$ and $\bm{G}_p(\bm{\nu}^p)$ are the coefficients' matrices corresponding to the current and previous coherent intervals. Taking SVD of the channel matrices in the previous coherence interval provides column and row prior subspaces. By knowing the velocity of users in the previous and current time blocks, the associated Doppler frequencies $\bm{\nu}^c$ and $\bm{\nu}^p$ can be obtained which then provide an estimate of the principal angles between $\bm{H}^c$ and $\bm{H}^p$. Then, by solving \eqref{ourproblem} with the weights proposed in Section \ref{ourwork}, the required number of pilots i.e. $T$ to estimate the current channel matrix can be substantially decreased which in turn leads to a huge resource saving. It is worth mentioning that there are also several papers (e.g. see \cite{liu2020angular,qin2018sparse,ma2018sparse,li2019time}) which exploit a time-varying channel adopting the auto regressive (AR) model which indeed differs from our modeling here. Specifically, these works do not indeed promote the inherent features of the channels in order to decrease the number of training overhead but instead provide estimates of the AR parameters using maximum likelihood (ML) estimation which is done by expectation maximization (EM) algorithm.}
\section{Simulation Results}\label{section.simulation}
In this section, we provide numerical experiments to show that non-uniform weighting strategy perform better that uniform weighting strategy. All experiments are performed using CVX package and numerical optimization is used to obtain optimal weights. 
\subsection{Numerical experiments}
$\bm{X} \in \mathbb{R}^{n\times n} $ is a square matrix with $ n = 20 $ and $ r=3 $. We use $\bm{X}^{\prime} = \bm{X}+ \bm{N} $ with $\bm{N}$ a small random perturbation matrix to construct the prior subspaces  $ \widetilde{\bm{\mathcal{U}}}_{r^{\prime}}$ and $ \widetilde{\bm{\mathcal{V}}}_{r^{\prime}}$ as spans of $ \bm{X}^{\prime} $ and $ \bm{X}'^{\rm H}$. Also $ \bm{\mathcal{U}}_{r}  $ and  $\bm{\mathcal{V}}_{r}  $ subspaces have known principal angles $   \mathbf{\bm{\theta}}_{u}  \in [0^{\degree} , 90^{\degree}]^{r} $ and $\mathbf{\bm{\theta}}_{v}  \in [0^{\degree} , 90^{\degree}]^{r} $ with $ \widetilde{\bm{\mathcal{U}}}_{r^{\prime}} $ and $\widetilde{\bm{\mathcal{V}}}_{r^{\prime}} $, respectively. $ \bm{U}_r$ and $\widetilde{\bm{U}}_{r^{\prime}}$  without loss of generality can be chosen such that 
\begin{align*}
	\bm{U}_r^{\rm H} \widetilde{\bm{U}}_{r^{\prime}}= [\cos \mathbf{\bm{\theta}}_{u} \quad \bm{0}_{r \times r^{\prime}-r} ],\bm{V}_r^{\rm H} \widetilde{\bm{V}}_{r^{\prime}}= [\cos \mathbf{\bm{\theta}}_{v} \quad \bm{0}_{r \times r^{\prime}-r} ],
\end{align*}
which amounts to redefining $  \bm{U}_r $ and $ \widetilde{\bm{U}}_{r^{\prime}} $ as the left and right singular matrices of $ \bm{U}_r^{\rm H} \widetilde{\bm{U}}_{r^{\prime}} $. The same conclusion can be cast for $\bm{V}_r$ and $\widetilde{\bm{V}}_{r^{\prime}}$.

We compare the problems \eqref{ourproblem}, \eqref{arminproblem} and the standard nuclear norm with optimal weights in different $\mathbf{\bm{\theta}}_{u}$ and $\mathbf{\bm{\theta}}_{v}$. We repeat each experiment 50 times with different choices of $\mathcal{A}$ and noise in noisy problems. For the solution $ \widehat{\bm{X}}$ of the problem, we evaluate the normalized recovery error (NRE) defined as:
$$ {\rm NRE} := \frac{\| \widehat{\bm{X}} - \bm{X}\|_{F}}{\|\bm{X}\|_{F}}.$$  An experiment is  successful if $ \rm {NRE} \le 10^{-4}$ .

Fig. \ref{fig1} shows the success rate and $\rm NRE$ without noise. In this experiment, we assume that the accuracy of prior information is good and the principal angles between subspaces are $ \mathbf{\bm{\theta}}_{u} = [1.0 , 1.6, 2.0, 2.2]$ and $ \mathbf{\bm{\theta}}_{v} = [1.0 , 1.4, 1.5, 2.8] $. We observe that weighted matrix recovery with non-uniform weights outperforms the single weight and standard problems.

\begin{figure*}[t]
	\centering
	\subfigure{\label{fig1:a}\includegraphics[width=3.5in, height=1.5in]{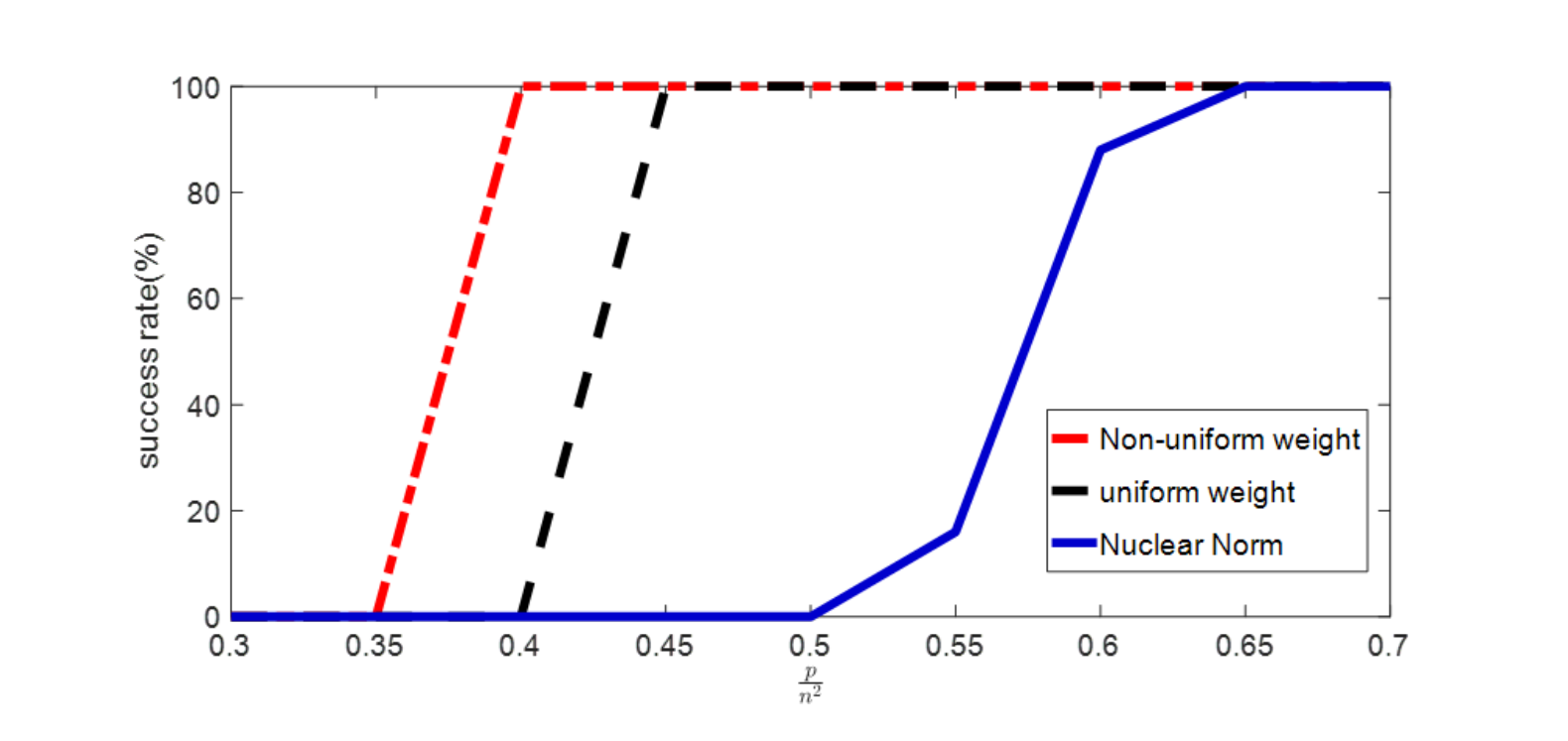}}
	\subfigure{\label{fig1:b}\includegraphics[width=3.5in, height=1.5in]{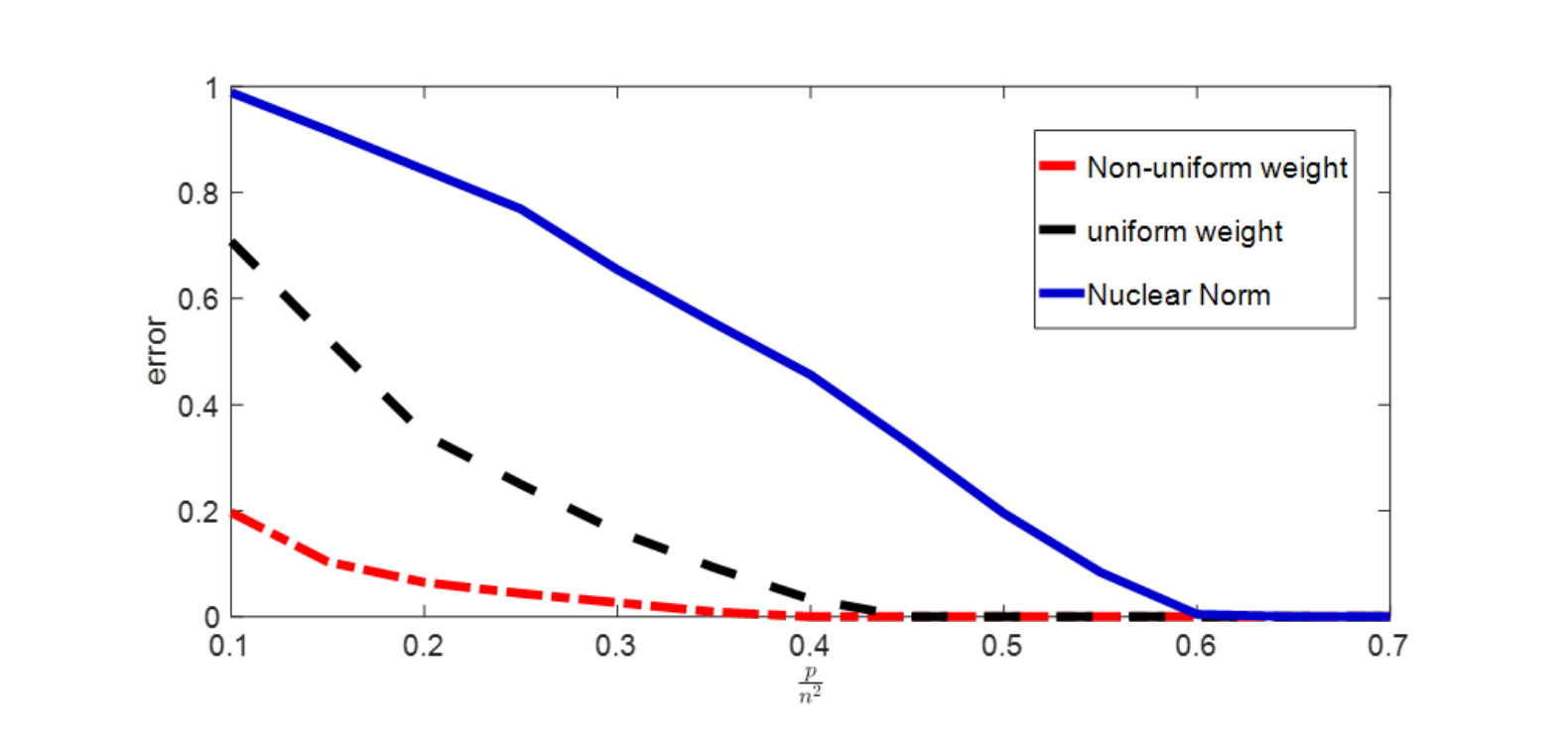}}
	\caption{Matrix recovery with different approaches without noise. Principal angles are $\mathbf{\bm{\theta}}_{u} = [1.0 , 1.6, 2.0, 2.2]$ and $\mathbf{\bm{\theta}}_{v} = [1.0 , 1.4, 1.5, 2.8] $.}
	\label{fig1}
\end{figure*}
In Fig. \ref{fig2}, the principal angles are $\mathbf{\bm{\theta}}_{u} = [2 , 13, 18, 27]$  and $\mathbf{\bm{\theta}}_{v} = [2 , 13, 18, 23]$ . In other words, some directions are accurate and some are not. As expected, the performance of matrix recovery with non-uniform weights is better than the other methods.
\begin{figure*}[t]
	\centering
	\subfigure{\label{fig2:a}\includegraphics[width=3.5in, height=1.5in]{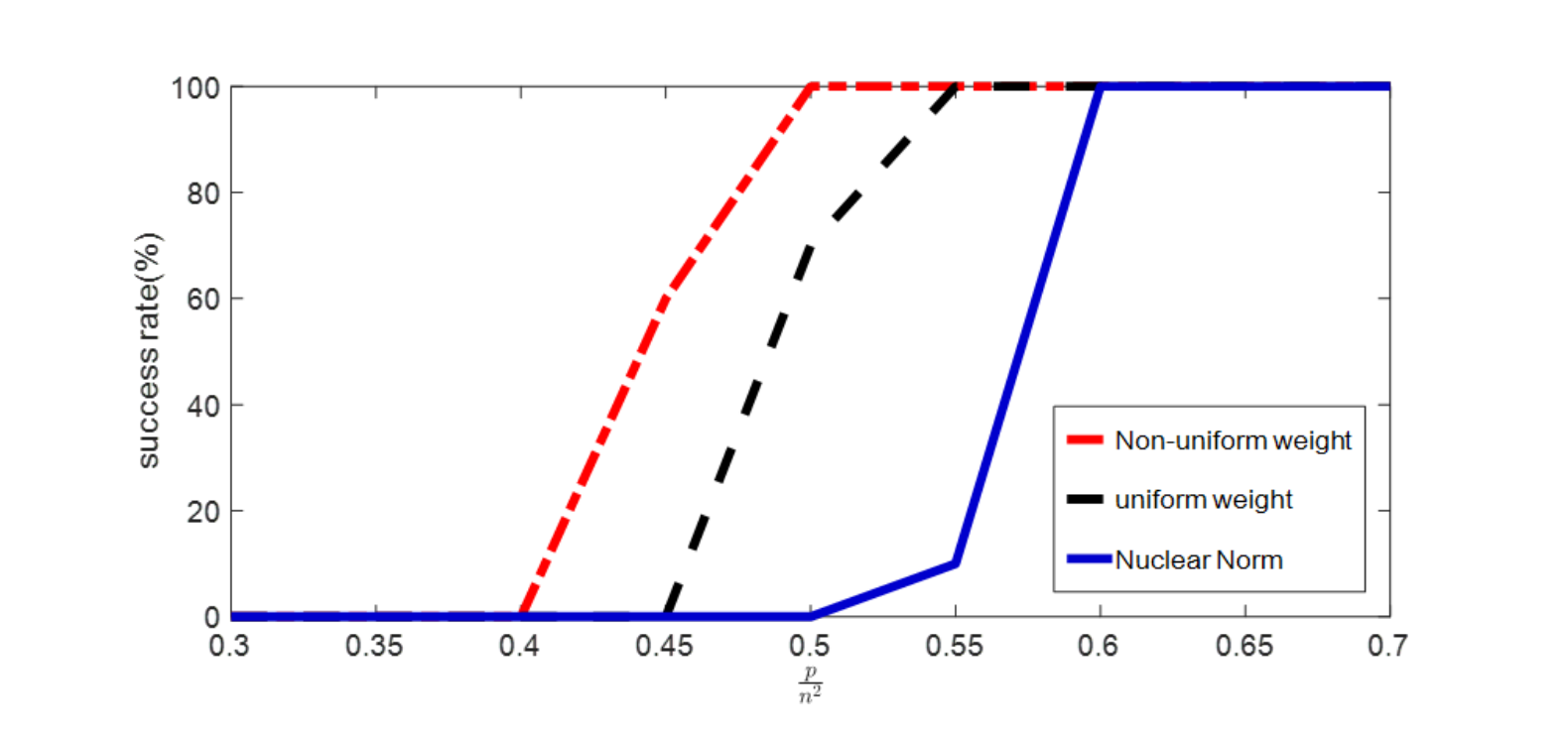}}
	\subfigure{\label{fig2:b}\includegraphics[width=3.5in, height=1.5in]{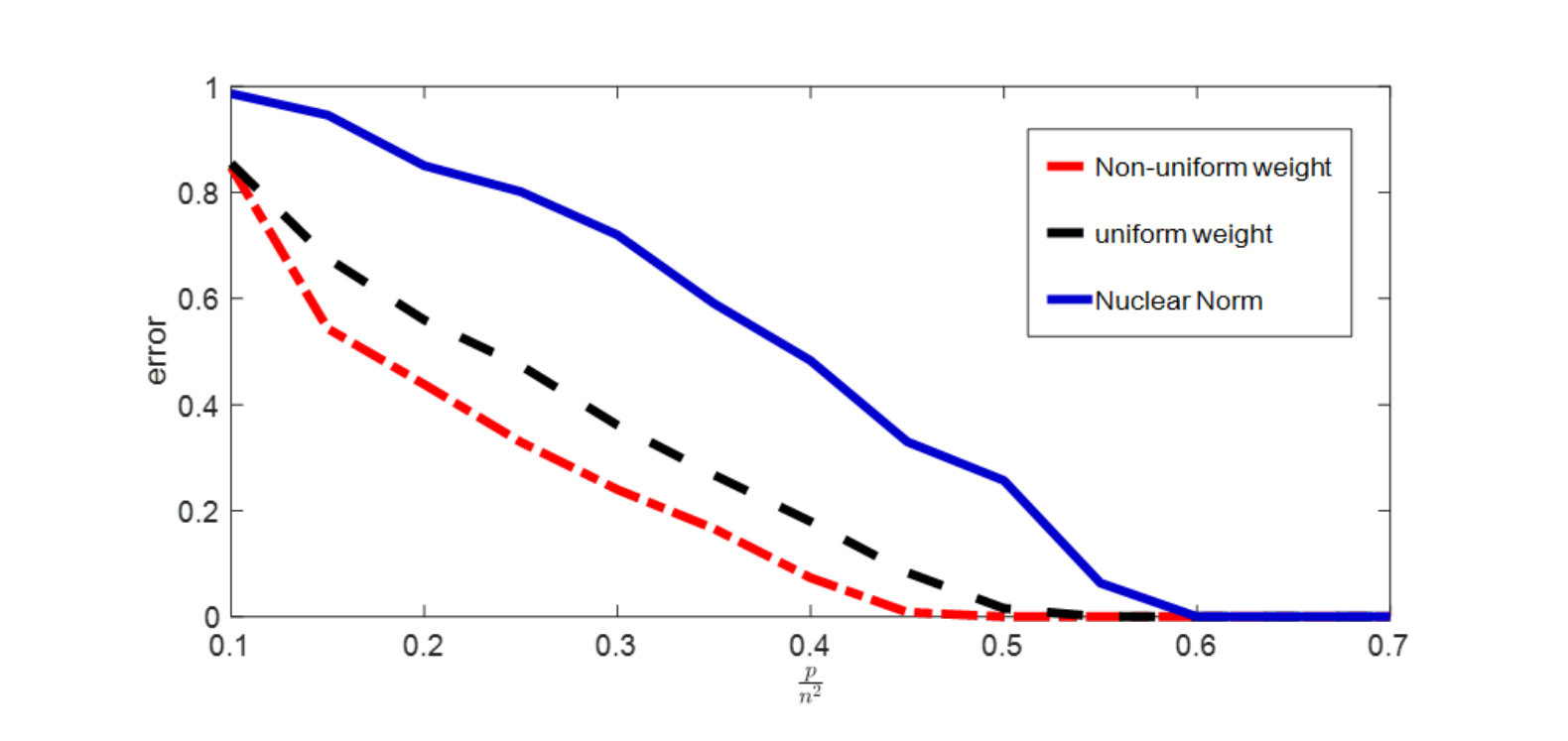}}
	\caption{Matrix recovery using different approaches without noise. Principal angles are $\mathbf{\bm{\theta}}_{u} = [2 , 13, 18, 27]$  and $\mathbf{\bm{\theta}}_{v} = [2 , 13, 18, 23]$.}
	\label{fig2}
\end{figure*}
In Fig. \ref{fig3}, the accuracy of prior information is not good: $\mathbf{\bm{\theta}}_{u} = [10 , 15, 19, 23]$ and  $\mathbf{\bm{\theta}}_{v} = [8 , 10, 15, 24]$.
\begin{figure*}[t]
	\centering
	\subfigure{\label{fig3:a}\includegraphics[width=3.5in, height=1.5in]{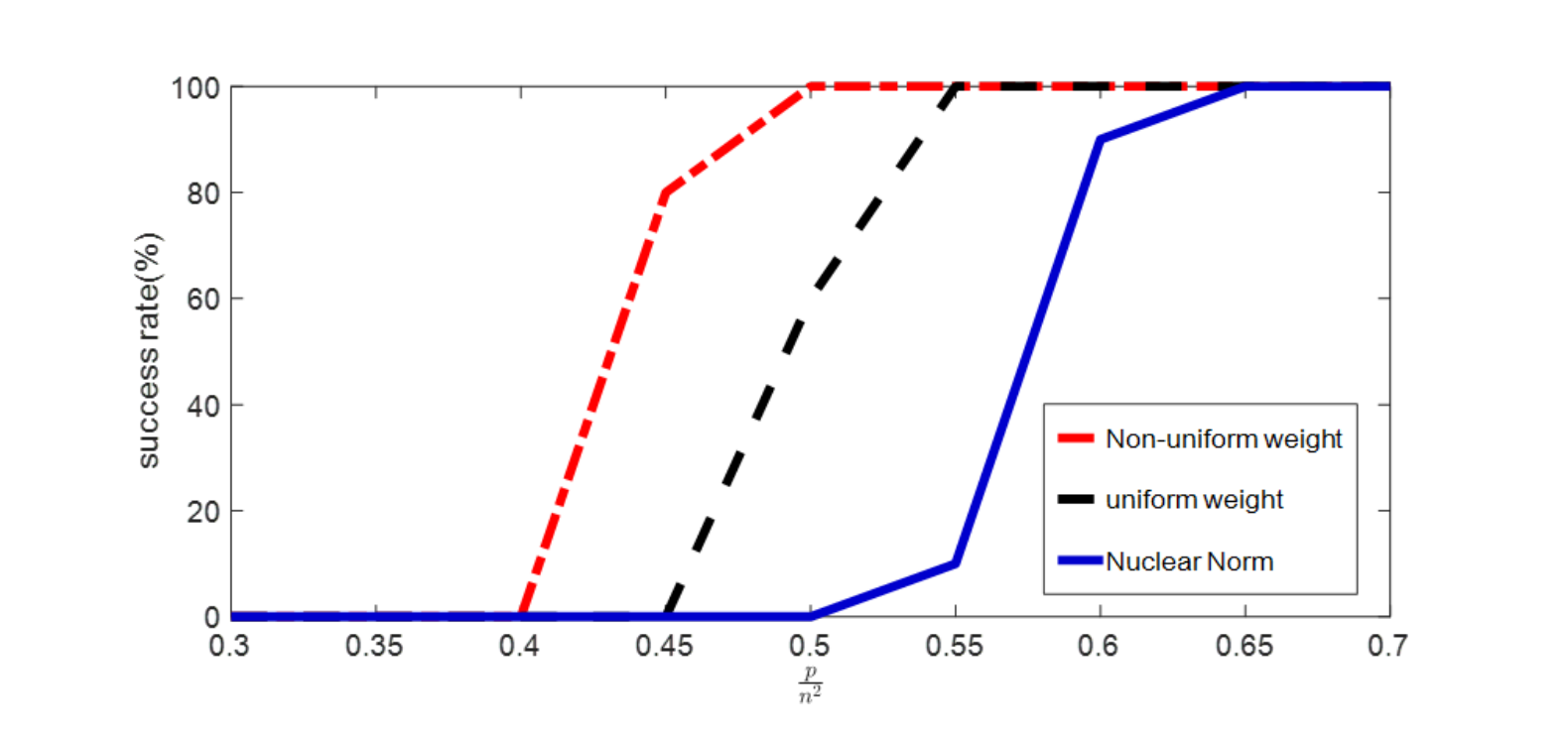}}
	\subfigure{\label{fig3:b}\includegraphics[width=3.5in, height=1.5in]{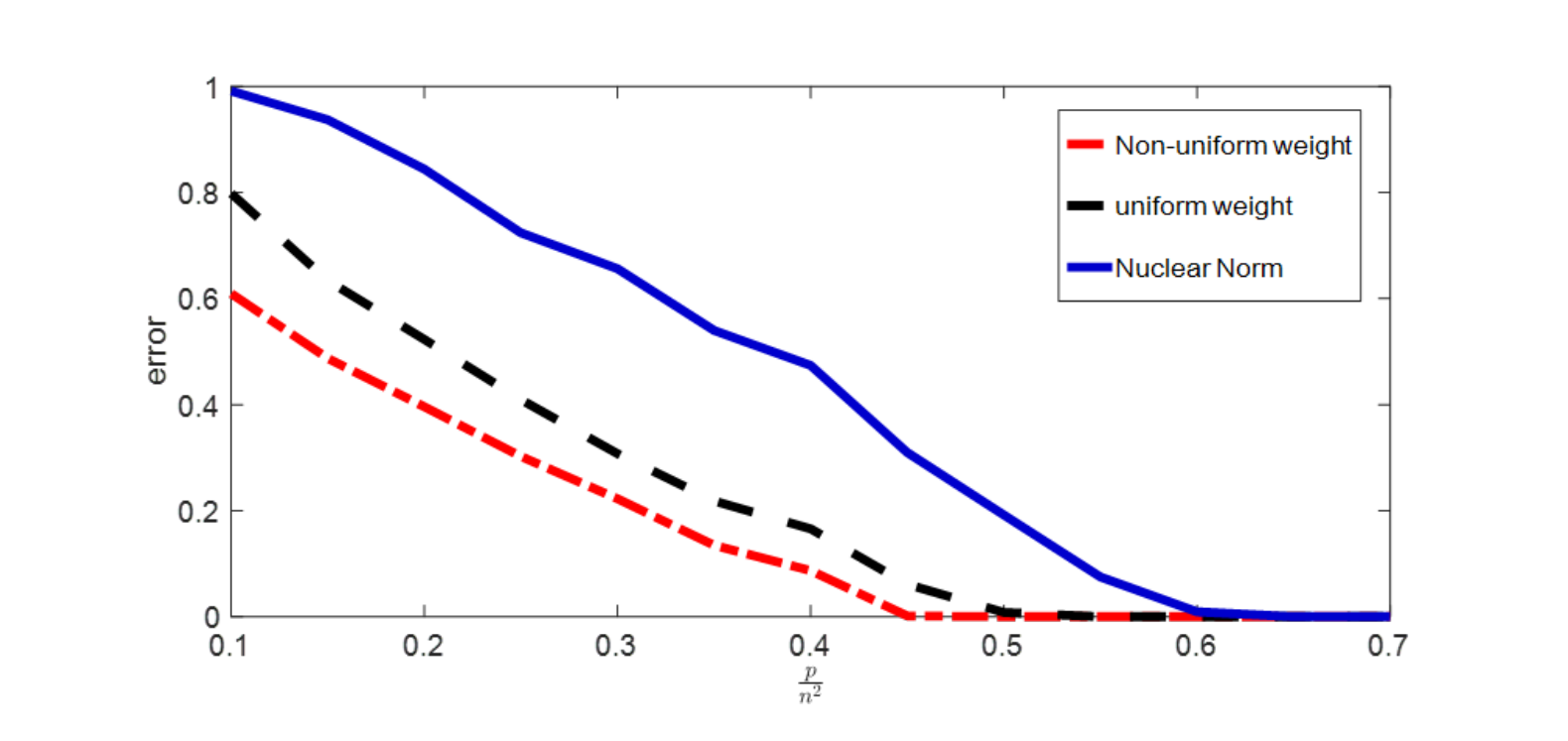}}
	\caption{Matrix recovery using different approaches without noise. Principal angles are $\mathbf{\bm{\theta}}_{u} = [10 , 15, 19, 23]$ and  $\mathbf{\bm{\theta}}_{v} = [8 , 10, 15, 24]$.}
	\label{fig3}
\end{figure*}
Fig. \ref{fig4} shows the $\rm NRE$ with noisy measurements for different accuracies. We observe that non-uniformly weighted matrix recovery is superior to the other methods in both noisy and noiseless cases.
\begin{figure*}[t]
	\centering
	\mbox{\subfigure{\includegraphics[width=2.34in, height=1.5in]{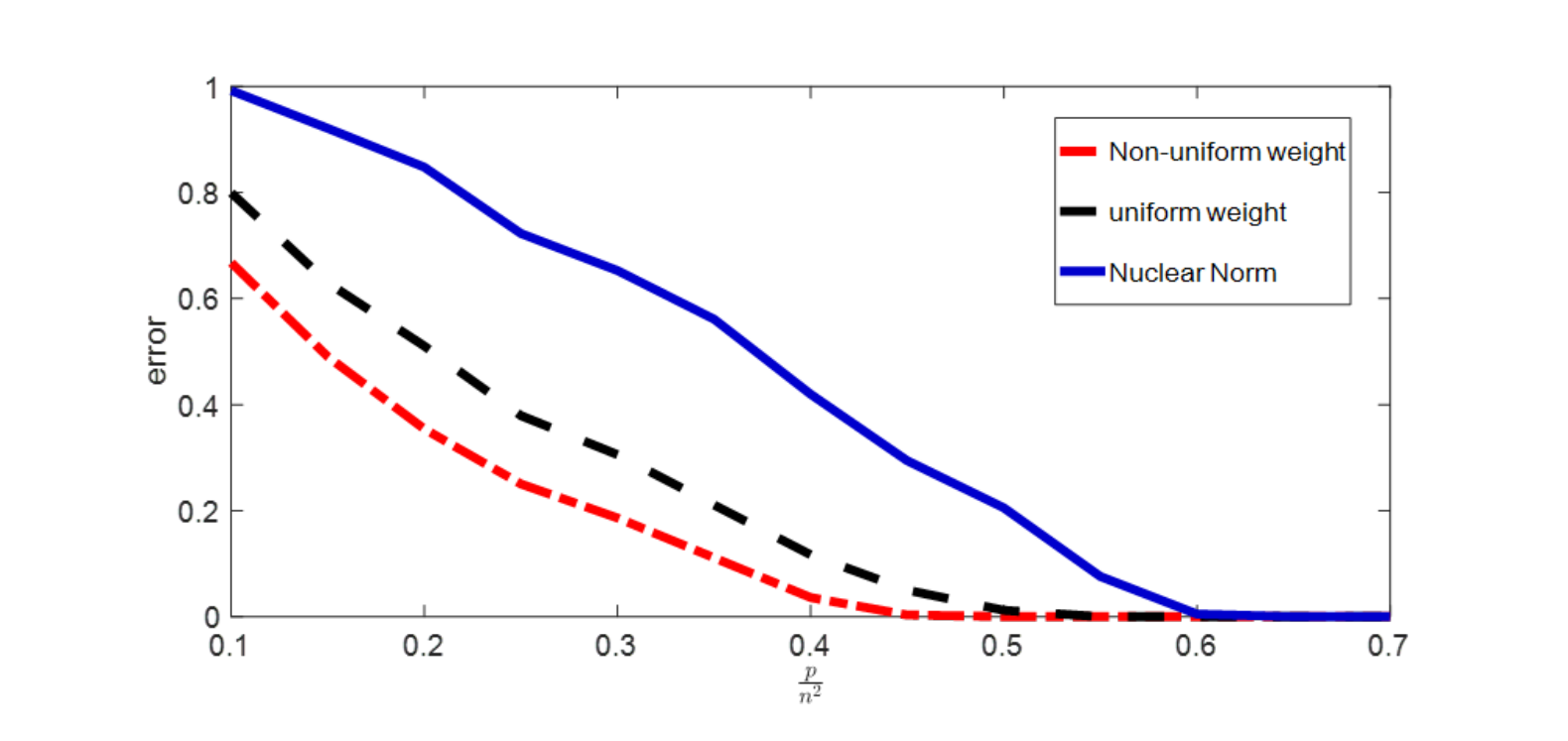}\label{fig4:a}}\quad
		\subfigure{\includegraphics[width=2.34in , height=1.5in]{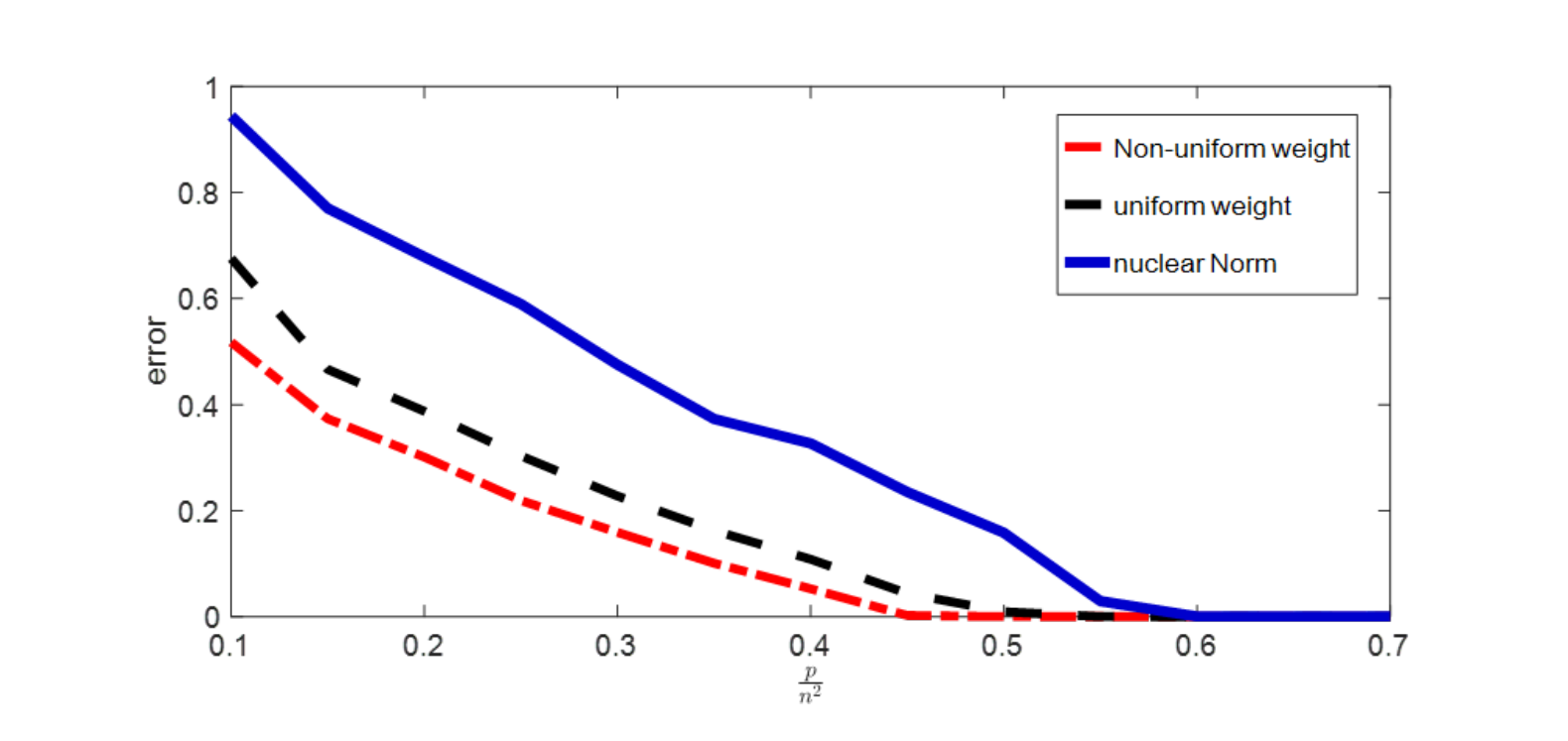}\label{fig4:b}}\quad
		\subfigure{\includegraphics[width=2.34in , height=1.5in]{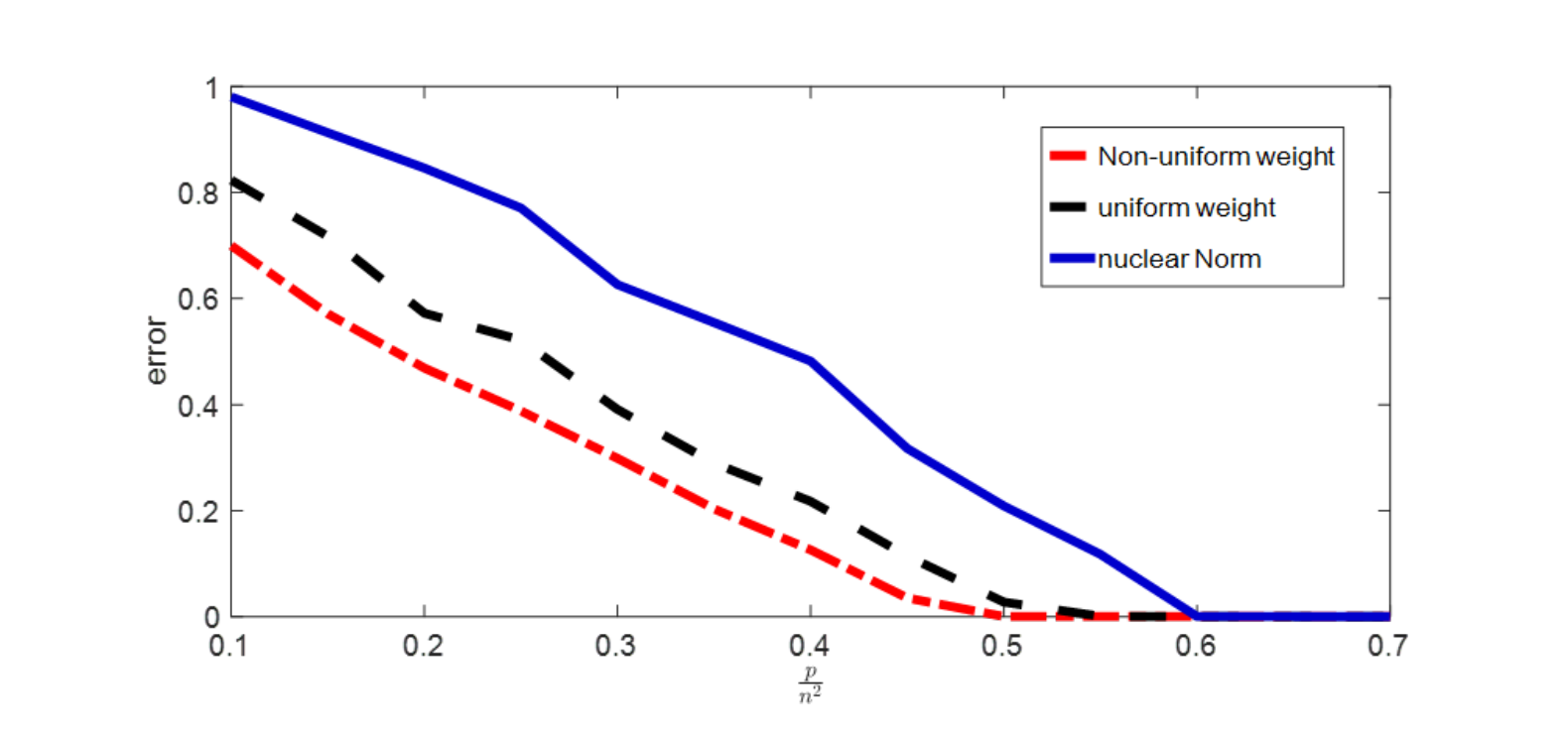}\label{fig4:c}}
	}	
	\caption{Noisy matrix recovery with principal angles in left, middle, and right similar to Figs. \ref{fig1}, \ref{fig2}, \ref{fig3}, respectively.}
	\label{fig4}
\end{figure*}
\subsection{Channel estimation in FDD massive MIMO}
We consider $ K=20 $ single-antenna users and a BS with $ M =64 $ antennas. The number of contributing communication paths in the current and previous coherence blocks are respectively considered as $ r = 6 $ and $ r' = 10 $. The
number of channel uses is fixed to $ T = 63 $. The distance between the BS antennas normalized by the wavelength is
$  \frac{D}{\lambda} =0.3 $. The elements of the pilot matrix $ \bm{\Phi} $ are drawn
from i.i.d. Gaussian distribution. The variance of additive
noise is 0.1. Table \ref{channel} shows the results of channel estimation
using nuclear norm minimization \eqref{minnuclear}, and \eqref{ourproblem} with uniform
and non-uniform weights. The NRE values corresponding to
each method are provided in Table  \ref{channel}. We observe that for a
fixed number of pilots (T), the performance of non-uniform
weights substantially outperforms the single-weight scenario \cite{eftekhari2018weighted} and nuclear norm minimization. This in turn implies that
to reach a fixed performance, the required number of pilots is
greatly reduced by our proposed method which could be also
translated as an enhanced spectral efficiency of the system
\begin{table*}
	\centering
	\caption{Time-varying channel estimation in FDD massive MIMO with parameters $M= 64$ , $ K= 20$, $ T=63 $, $ {\rm rank}(H)=R= 6 $ , $ R'= 12 $.}
	\scalebox{0.9}{
		\setlength\extrarowheight{3pt}
		\begin{tabular}{c|c|c|c}
			${\#}Time-slot $ & $ Non-uniform ~weight $ & $ Uniform~ weight $ & $ Nuclear ~Norm$ \\
			\hline 
			$ 1 $ & $ \bm{0.1162} $ & $0.1162  $ & $ 0.1162$ \\
			\hline
			$ 2 $ & $ \bm{0.0334}$ & $ 0.0665 $ & $ 0.1810 $  \\
			\hline
			$ 3 $ & $\bm{0.0179} $ & $ 0.0496$ & $0.1604 $  \\
			\hline 
			$ 4 $ & $ \bm{0.0029}$ & $ 0.0377$ & $ 0.1371 $ \\
			\hline 
			$ 5 $ & $ \bm{0.0022} $ & $ 0.0346$ & $ 0.1140 $  \\
			\hline 
	\end{tabular}}
	\label{channel}
\end{table*}

\section{conclusion}\label{conclusion}
In this paper, we proposed a weighted nuclear norm minimization to extract the feature of low-rankness in the presence of prior subspace information. The optimal weights are designed such that they weaken the RIP condition of the measurement operator as much as possible. Instead of using a single weight, multiple distinct weights are employed to penalize all the directions in the prior subspace. Analytical and simulation results show that our non-uniform weighting approach outperforms uniform weighting strategy and standard nuclear norm minimization. We have also provided a section that shows how our method can be applied to dynamic channel estimation in FDD massive MIMO.
\appendices
\section{Necessary Lemmas}
\label{section Necessary Lemmas}
In this section, we provide some necessary lemmas that are useful for proof of Theorem \ref{our theorem}.
\subsection{Constructing the Bases}
In this section, we introduce the bases which simplify the proofs.
\begin{lem}\label{Lem 3 }
	\cite{daei2018optimal} Suppose that $\bm{X}_r \in \mathbb{R}^{n \times n}$ is a rank $ r $ matrix with column and row subspaces  $\bm{\mathcal{U}}_{r}$ and $\bm{\mathcal{V}}_{r}$, respectively. Also, consider $ \widetilde{\bm{\mathcal{U}}}_{r^{\prime}} $ and $\widetilde{\bm{\mathcal{V}}}_{r^{\prime}}$ of dimension $ r^{\prime} \ge r $ with $r$ known principal angles $ \mathbf{\bm{\theta}}_{u}$ and $ \mathbf{\bm{\theta}}_{v}$ with subspaces $ \bm{\mathcal{U}}_{r} $ and $\bm{\mathcal{V}}_{r}$ as prior information. There exist orthogonal matrices $\bm{U}_r , \bm{V}_r \in \mathbb{R}^{n \times r }$ and  $\widetilde{\bm{U}}_{r^{\prime}} , \widetilde{\bm{V}}_{r^{\prime}}\in \mathbb{R}^{n \times r^{\prime}}$ such that 
	\begin{align*}
		& \bm{\mathcal{U}}_{r} =  {\rm span}(\bm{U}_r),  \quad \widetilde{\bm{\mathcal{U}}}_{r^{\prime}} = {\rm span}(\bm{V}_r)  \nonumber \\
		&\widetilde{\bm{\mathcal{U}}}_{r^{\prime}} = {\rm span}(\widetilde{\bm{U}}_{r^{\prime}} ), \quad \widetilde{\bm{\mathcal{V}}}_{r^{\prime}} = {\rm span}(\widetilde{\bm{V}}_{r^{\prime}} )
	\end{align*}
	and orthonormal matrices 
	\begin{align}\label{B_l and B_r}
		& \bm{B}_L := [\bm{U}_r \quad  \bm{U}^{\prime}_{1,r} \quad \bm{U}^{\prime}_{2,r^{\prime} - r} \quad \bm{U}^{\prime \prime}_{n - r - r^{\prime}}] \in \mathbb{R}^{n \times n} \nonumber \\
		& \bm{B}_R := [\bm{V}_r \quad  \bm{V}^{\prime}_{1,r} \quad \bm{V}^{\prime}_{2,r^{\prime} - r}\quad  \bm{V}^{\prime \prime}_{n - r - r^{\prime}}] \in \mathbb{R}^{n \times n}.
	\end{align}
	For definitions of the submatrices, see \cite{daei2018optimal}.
\end{lem}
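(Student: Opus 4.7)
The plan is to invoke a CS-type decomposition for the pair of subspaces $(\bm{\mathcal{U}}_r,\widetilde{\bm{\mathcal{U}}}_{r'})$ on the left side and $(\bm{\mathcal{V}}_r,\widetilde{\bm{\mathcal{V}}}_{r'})$ on the right side, and then to complete the resulting orthonormal system to a basis of $\mathbb{R}^n$ using an arbitrary orthonormal completion in the remaining $n-r-r'$ directions. Since the two sides are handled identically, I would construct $\bm{B}_L$ in detail and state that $\bm{B}_R$ follows by the same argument with every $\bm{U}$-symbol replaced by the corresponding $\bm{V}$-symbol.

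First, I would start from any orthonormal bases $\bm{U}_r^{(0)}\in\mathbb{R}^{n\times r}$ of $\bm{\mathcal{U}}_r$ and $\widetilde{\bm{U}}_{r'}^{(0)}\in\mathbb{R}^{n\times r'}$ of $\widetilde{\bm{\mathcal{U}}}_{r'}$ and compute the thin SVD
\[
(\bm{U}_r^{(0)})^{\rm H}\,\widetilde{\bm{U}}_{r'}^{(0)} \;=\; \bm{P}\,[\,{\rm diag}(\cos\bm{\theta}_u)\;|\;\bm{0}_{r\times(r'-r)}\,]\,\bm{Q}^{\rm H},
\]
whose singular values are, by the definition of principal angles, exactly the cosines $\cos\bm{\theta}_u(i)$. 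Setting $\bm{U}_r := \bm{U}_r^{(0)}\bm{P}$ and $\widetilde{\bm{U}}_{r'} := \widetilde{\bm{U}}_{r'}^{(0)}\bm{Q}$ then yields new orthonormal bases of the same subspaces that satisfy the canonical identity $\bm{U}_r^{\rm H}\widetilde{\bm{U}}_{r'} = [\,{\rm diag}(\cos\bm{\theta}_u)\mid\bm{0}_{r\times(r'-r)}\,]$, which is precisely the normalization already used in the simulation section.

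Next I would build the intermediate blocks. Partition $\widetilde{\bm{U}}_{r'} = [\widetilde{\bm{U}}_{r'}^{(a)}\mid\widetilde{\bm{U}}_{r'}^{(b)}]$ into its first $r$ and last $r'-r$ columns. The orthogonal projection of $\widetilde{\bm{U}}_{r'}^{(a)}$ onto $\bm{\mathcal{U}}_r$ is $\bm{U}_r\,{\rm diag}(\cos\bm{\theta}_u)$, so the residual
\[
\bm{W} := \widetilde{\bm{U}}_{r'}^{(a)} - \bm{U}_r\,{\rm diag}(\cos\bm{\theta}_u) = (\bm{I}_n-\bm{U}_r\bm{U}_r^{\rm H})\widetilde{\bm{U}}_{r'}^{(a)}
\]
lies in $\bm{\mathcal{U}}_r^{\perp}$, and a direct calculation using $(\widetilde{\bm{U}}_{r'}^{(a)})^{\rm H}\widetilde{\bm{U}}_{r'}^{(a)} = \bm{I}_r$ together with the diagonal form above yields $\bm{W}^{\rm H}\bm{W} = {\rm diag}(\sin^2\bm{\theta}_u)$. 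Normalizing the columns of $\bm{W}$ that correspond to nonzero sines produces a set of orthonormal vectors, which (if any principal angle vanishes) I would extend inside $\bm{\mathcal{U}}_r^{\perp}$ by arbitrary unit vectors orthogonal to everything chosen so far; this extension exists since $n\ge r+r'$. The resulting $n\times r$ matrix is taken as $\bm{U}'_{1,r}$. I then set $\bm{U}'_{2,r'-r} := \widetilde{\bm{U}}_{r'}^{(b)}$: this block lies in $\bm{\mathcal{U}}_r^{\perp}$ because $\bm{U}_r^{\rm H}\widetilde{\bm{U}}_{r'}^{(b)} = \bm{0}$, and it is orthogonal to $\bm{U}'_{1,r}$ because $(\widetilde{\bm{U}}_{r'}^{(a)})^{\rm H}\widetilde{\bm{U}}_{r'}^{(b)} = \bm{0}$. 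Finally, $\bm{U}''_{n-r-r'}$ is any orthonormal basis of the orthogonal complement of the first three blocks, which has dimension $n-r-r'$.

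The one point needing real attention is the pairwise orthogonality of the columns of $\bm{U}'_{1,r}$, which is the essential CS-decomposition content; it reduces to the computation $\bm{W}^{\rm H}\bm{W} = {\rm diag}(\sin^2\bm{\theta}_u)$ described above. Once this is established, all four blocks of $\bm{B}_L$ are mutually orthonormal by construction, and repeating the entire procedure verbatim for $(\bm{\mathcal{V}}_r,\widetilde{\bm{\mathcal{V}}}_{r'})$ produces $\bm{B}_R$ with the same block structure, completing the proof.
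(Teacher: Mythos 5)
Your construction is correct and is essentially the standard CS-decomposition argument that the paper delegates to the cited reference \cite{daei2018optimal} rather than proving itself: diagonalize the cross-Gram matrix of arbitrary orthonormal bases via SVD to realize the principal-angle cosines, then orthonormalize the residual $(\bm{I}-\bm{U}_r\bm{U}_r^{\rm H})\widetilde{\bm{U}}_{r'}^{(a)}$, whose Gram matrix ${\rm diag}(\sin^2\bm{\theta}_u)$ you compute correctly, and complete the basis. The only cosmetic point is a sign convention: your $\bm{U}'_{1,r}$ gives $\widetilde{\bm{U}}_{r'}^{(a)}=\bm{U}_r\cos\bm{\theta}_u+\bm{U}'_{1,r}\sin\bm{\theta}_u$, whereas the relation the paper states immediately after the lemma uses $-\sin\bm{\theta}_u$, so one should negate $\bm{U}'_{1,r}$ (and $\bm{U}'_{2,r'-r}$) to match; this does not affect the validity of the lemma.
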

Lemma \ref{Lem 3 } results in the following relation:
\begin{align}
	\widetilde{\bm{U}}_{r^{\prime}} = \bm{B}_L\begin{bmatrix}
		\cos \mathbf{\bm{\theta}}_{u} & \\
		-\sin \mathbf{\bm{\theta}}_{u} &  \\  
		& -\bm{I}_{r^{\prime} -r}\\
		& \bm{0} 
	\end{bmatrix}.
\end{align}
Then orthogonal projections onto the subspaces $ \widetilde{\bm{\mathcal{U}}}_{r^{\prime}} $ and $ \widetilde{\bm{\mathcal{U}}}_{r^{\prime}}^{\perp}  $ are:
\begin{align*}
	&\bm{P}_{\widetilde{\bm{\mathcal{U}}}_{r^{\prime}}} = \widetilde{\bm{U}}_{r^{\prime}}\widetilde{\bm{U}}_{r^{\prime}}^{\rm H}  \nonumber \\
	& = \bm{B}_L \small{\begin{bmatrix}
			\cos^2 \mathbf{\bm{\theta}}_{u} & -\sin \mathbf{\bm{\theta}}_{u} \cos \mathbf{\bm{\theta}}_{u} &  &  \\
			-\sin \mathbf{\bm{\theta}}_{u} \cos \mathbf{\bm{\theta}}_{u} & \sin^2 \mathbf{\bm{\theta}}_{u} &   &   \\
			& & \bm{I}_{r^{\prime} - r} & \\
			& &  & \bm{0} 
	\end{bmatrix}}
	\bm{B}_L^{\rm H},
\end{align*}
\normalfont
\begin{align*}
	& \bm{P}_{\widetilde{\bm{\mathcal{U}}}_{r^{\prime}}^{\perp}} = \bm{I} - \bm{P}_{\widetilde{\bm{\mathcal{U}}}_{r^{\prime}}}  \nonumber \\
	&  = \bm{B}_L \small{\begin{bmatrix}
			\sin^2 \mathbf{\bm{\theta}}_{u}  & \sin \mathbf{\bm{\theta}}_{u} \cos \mathbf{\bm{\theta}}_{u} &  & \\
			\sin \mathbf{\bm{\theta}}_{u} \cos \mathbf{\bm{\theta}}_{u} & \cos^2 \mathbf{\bm{\theta}}_{u} &  & \\
			&  & \bm{0}_{r^{\prime} - r} &   \\
			&  &  & \bm{I}_{ n - r^{\prime}- r}
	\end{bmatrix}}
	\bm{B}_L^{\rm H}.
\end{align*}
\normalfont	
Also, we have
\begin{align}\label{34}
	&\bm{Q}_{\widetilde{\bm{\mathcal{U}}}_{r^{\prime}}} := \widetilde{\bm{U}}_{r^{\prime}}\bm{\Lambda}\widetilde{\bm{U}}_{r^{\prime}}^{\rm{H}} + \bm{P}_{\widetilde{\bm{\mathcal{U}}}_{ r^{\prime}}^{\perp}}  \nonumber \\& 
	= \bm{B}_L
	\left[\small{\begin{array}{ccc}
			\bm{\Lambda}_1 \cos^2 \mathbf{\bm{\theta}}_{u} + \sin^2 \mathbf{\bm{\theta}}_{u}\\
			(\bm{I}-\bm{\Lambda}_1) \sin \mathbf{\bm{\theta}}_{u} \cos \mathbf{\bm{\theta}}_{u}\\
			\\ \\
	\end{array}}\right.\nonumber\\
	&\quad \quad \quad \quad \quad \quad \left.\small{\begin{array}{ccc}
			(\bm{I}-\bm{\Lambda}_1) \sin\mathbf{\bm{\theta}}_{u} \cos \mathbf{\bm{\theta}}_{u}&  & \\
			\bm{\Lambda}_1 \sin^2 \mathbf{\bm{\theta}}_{u} + \cos^2 \mathbf{\bm{\theta}}_{u} & & \\
			&\bm{\Lambda}_2&  \\
			&  &\bm{I}_{n-r^{\prime}-r}
	\end{array}}\right]\bm{B}_L^{\rm H},
\end{align}
\normalfont	
where  $ \bm{\Lambda} : = \begin{bmatrix}
	\bm{\Lambda}_1 \in \mathbb{R}^{r \times r}& \\
	& \bm{\Lambda}_2 \in \mathbb{R}^{r^{\prime}-r \times r^{\prime}-r}
\end{bmatrix} $.

We will rewrite $\bm{Q}_{\widetilde{\bm{\mathcal{U}}}_{r^{\prime}}}$ to incorporate an upper-triangular matrix. First, define the orthonormal base:
\begin{multline}
	\bm{O}_L := \left[\small{
		\begin{matrix}
			(\bm{\Lambda}_1 \cos^2 \mathbf{\bm{\theta}}_{u} + \sin^2 \mathbf{\bm{\theta}}_{u}).\bm{\Delta}_L^{-1}  \\
			-(\bm{I}-\bm{\Lambda}_1) \sin \mathbf{\bm{\theta}}_{u} \cos \mathbf{\bm{\theta}}_{u} .\bm{\Delta}_L^{-1}\\
			\\ 
			\\
	\end{matrix}}\right.               
	\\
	\left.
	\small{\begin{matrix}
			-(\bm{I}-\bm{\Lambda}_1) \sin \mathbf{\bm{\theta}}_{u} \cos \mathbf{\bm{\theta}}_{u} .\bm{\Delta}_L^{-1} &  &   \\
			(\bm{\Lambda}_1 \cos^2 \mathbf{\bm{\theta}}_{u} + \sin^2 \mathbf{\bm{\theta}}_{u}) .\bm{\Delta}_L^{-1}  &  &  \\ 
			& \bm{I}_{r^{\prime}-r} &   \\
			&  & \bm{I}_{n-r^{\prime}-r}
	\end{matrix}}\right],
\end{multline}
\normalfont
where $ \bm{\Delta}_L := \sqrt{\bm{\Lambda}_1^2 \cos^2 \mathbf{\bm{\theta}}_{u} + \sin^2 \mathbf{\bm{\theta}}_{u}} \in \mathbb{R}^{n \times n }$  is an invertible matrix since $ \bm{\Lambda}_1 \succeq \bm{0} $. We can rewrite \eqref{34} as
\begin{align}
	&\bm{Q}_{\widetilde{\bm{\mathcal{U}}}_{r^{\prime}}} =\bm{B}_L (\bm{O}_L\bm{O}_L^{\rm H})
	\left[\small{\begin{array}{ccc}
			\bm{\Lambda}_1 \cos^2 \mathbf{\bm{\theta}}_{u} + \sin^2 \mathbf{\bm{\theta}}_{u}\\
			(\bm{I}-\bm{\Lambda}_1) \sin \mathbf{\bm{\theta}}_{u} \cos \mathbf{\bm{\theta}}_{u}\\
			\\
			\\
	\end{array}}\right.\nonumber\\
	&\quad \quad \quad \left.\small{\begin{array}{ccc}
			(\bm{I}-\bm{\Lambda}_1) \sin\mathbf{\bm{\theta}}_{u} \cos \mathbf{\bm{\theta}}_{u}& & \\
			\bm{\Lambda}_1 \sin^2 \mathbf{\bm{\theta}}_{u} + \cos^2 \mathbf{\bm{\theta}}_{u} & & \\
			&\bm{\Lambda}_2 & \\
			& &\bm{I}_{n-r^{\prime}-r}
	\end{array}}\right]\bm{B}_L^{\rm H} \nonumber \\
	&=\small{\bm{B}_L \bm{O}_L \begin{bmatrix}
			\bm{\Delta}_L &(\bm{I}-\bm{\Lambda}_{1}^{2}) \sin \mathbf{\bm{\theta}}_{u} \cos \mathbf{\bm{\theta}}_{u}.\bm{\Delta}_L^{-1}& &\\
			&\bm{\Lambda}_1\bm{\Delta}_L^{-1} & &\\
			&&\bm{\Lambda}_2&  \\
			& & &\bm I
		\end{bmatrix}
				\bm{B}_L^{\rm H}} \nonumber \\
			& =: \bm{B}_L \bm{O}_L \begin{bmatrix}
				\bm{L}_{11} & \bm{L}_{12} & & \\
				& \bm{L}_{22} & & \\ 
				&‌&\bm{\Lambda}_2‌& \\
				& & &\bm{I}_{n-r^{\prime}-r}
			\end{bmatrix} \bm{B}_L^{\rm H}  \nonumber \\
			& = \bm{B}_L \bm{O}_L \bm{L} \bm{B}_L^{\rm H}, 
		\end{align}
		where $\bm{L} \in \mathbb{R}^{n \times n }$ is  a block upper-triangular matrix:
		
		\begin{align}\label{37}
			&\bm{L} :=\small{\begin{bmatrix}
					\bm{L}_{11} & \bm{L}_{12} & & \\
					& \bm{L}_{22} & & \\ 
					&‌&\bm{\Lambda}_2‌& \\
					& & &\bm{I}_{n-r^{\prime}-r}
			\end{bmatrix}} \nonumber \\
			&=\small{\begin{bmatrix}
					\bm{\Delta}_L  &  (\bm{I}-\bm{\Lambda}_{1}^{2}) \sin \mathbf{\bm{\theta}}_{u} \cos \mathbf{\bm{\theta}}_{u}.\bm{\Delta}_L^{-1}  & & \\
					& \bm{\Lambda}_1\bm{\Delta}_L^{-1} &  & \\
					& & \bm{\Lambda}_2& \\
					& & &\bm{I}_{n-r^{\prime}-r} \\
			\end{bmatrix}}.
		\end{align}
		
		\normalfont
		Since $\bm{B}_L$ and $\bm{O}_L$ are orthonormal bases, it follows that: 
		\begin	{align}\label{38}
		\| \bm{Q}_{\widetilde{\bm{\mathcal{U}}}_{r^{\prime}}} \| = \| \bm{L} \| = 1 .
	\end{align} 
	Similar results can  also be deduced for the row subspace:
	\begin{align}
		&\bm{R} :=\small{\begin{bmatrix}
				\bm{R}_{11} & \bm{R}_{12} & & \\
				& \bm{R}_{22} & & \\
				& &\bm{\Gamma}_2 & \\
				& & & \bm{I}_{n-r^{\prime}-r}
		\end{bmatrix} } \nonumber \\
		&= \small{\begin{bmatrix}
				\bm{\Delta}_R  &  (\bm{I}-\bm{\Gamma}_{1}^{2}) \sin \mathbf{\bm{\theta}}_{v} \cos \mathbf{\bm{\theta}}_{v}.\bm{\Delta}_R^{-1}  & &  \\
				& \bm{\Gamma}_{1}\bm{\Delta}_R^{-1} & & \\
				& & \bm{\Gamma}_2 & \\
				& & & \bm{I}_{n-r^{\prime}-r}  
		\end{bmatrix}},
	\end{align}
	\normalfont
	where $ \bm{\Delta}_R := \sqrt{\bm{\Gamma}_{1}^{2} \cos^2 \mathbf{\bm{\theta}}_{v} + \sin^2 \mathbf{\bm{\theta}}_{v}} $ has similar properties as $ \bm{\Delta}_L $. For an arbitrary matrix $ \bm{H} \in \mathbb{R}^{n \times n } $  we will have:
	\begin{align}\label{40}
		&\bm{Q}_{\widetilde{\bm{\mathcal{U}}}_{r^{\prime}}}\bm{H}\bm{Q}_{\widetilde{\bm{\mathcal{V}}}_{r^{\prime}}} = \bm{B}_L \bm{O}_L \bm{L} (\bm{B}_L^{\rm H} \bm{H} \bm{B}_R) \bm{R}^{\rm H} \bm{O}_R^{\rm H}  \bm{B}_R^{\rm H}   \nonumber \\
		&= \bm{B}_L \bm{O}_L \bm{L} \overline{\bm{H}} \bm{R}^{\rm H} \bm{O}_R^{\rm H} \bm{B}_R^{\rm H} \quad (\overline{\bm{H}}:=\bm{B}_L^{\rm H} \bm{H} \bm{B}_R )  \nonumber \\
		&=:\bm{B}_L \bm{O}_L \bm{L}
		\small{\begin{bmatrix}
				\overline{\bm{H}}_{11} & \overline{\bm{H}}_{12} &‌\overline{\bm{H}}_{13} & \overline{\bm{H}}_{14}\\
				\overline{\bm{H}}_{21} & \overline{\bm{H}}_{22} & \overline{\bm{H}}_{23} & \overline{\bm{H}}_{24}\\
				\overline{\bm{H}}_{31} & \overline{\bm{H}}_{32} &‌\overline{\bm{H}}_{33} & \overline{\bm{H}}_{34}\\
				\overline{\bm{H}}_{41} &  \overline{\bm{H}}_{42} & \overline{\bm{H}}_{43}& \overline{\bm{H}}_{44}
		\end{bmatrix}}
		\normalfont
		\bm{R}^{\rm H} \bm{O}_R^{\rm H} \bm{B}_R^{\rm H}.
	\end{align}
	Since ${\rm span}(\bm{X}_r) = {\rm span}(\bm{U}_r)$ and ${\rm span}(\bm{X}_r^{\rm H}) = {\rm span}(\bm{V}_r)$ and with upper triangular matrices $\bm{L}$ and $\bm{R}$, we can rewrite $ \bm{Q}_{\widetilde{\bm{\mathcal{U}}}_{r^{\prime}}}  \bm{X}_r \bm{Q}_{\widetilde{\bm{\mathcal{V}}}_{r^{\prime}}} $ in terms of new bases:
	\begin{align}\label{42}
		&\bm{Q}_{\widetilde{\bm{\mathcal{U}}}_{r^{\prime}}}  \bm{X}_r \bm{Q}_{\widetilde{\bm{\mathcal{V}}}_{r^{\prime}}} = \bm{B}_L \bm{O}_L \bm{L} (\bm{B}_L^{\rm H} \bm{X}_r \bm{B}_R) \bm{R}^{\rm H} \bm{O}_R^{\rm H}  \bm{B}_R^{\rm H}   \nonumber \\
		&= \bm{B}_L \bm{O}_L \bm{L} \overline{\bm{X}}_r \bm{R}^{\rm H} \bm{O}_R^{\rm H} \bm{B}_R^{\rm H} \quad (\overline{\bm{X}}_r:=\bm{B}_L^{\rm H} \bm{X}_r \bm{B}_R )  \nonumber \\
		&=:\bm{B}_L \bm{O}_L \bm{L}
		\begin{bmatrix} 
			\overline{\bm{X}}_{r,11} &  \\
			&  \bm{0}_{n-r} 
		\end{bmatrix}\bm{R}^{\rm H} \bm{O}_R^{\rm H} \bm{B}_R^{\rm H} \nonumber \\ 
		&=\bm{B}_L \bm{O}_L 
		\begin{bmatrix}
			\bm{L}_{11}\overline{\bm{X}}_{r,11}\bm{R}_{11} &  \\
			&  \bm{0}_{n-r} 
		\end{bmatrix}\bm{O}_R^{\rm H} \bm{B}_R^{\rm H}.
	\end{align}
	\begin{lem}\label{lem 4}
		The operator norms regarding the sub-blocks of L in \eqref{37} are as follows:
		\begin{align}
			&\|\bm{L}_{11}\| = \|\bm{\Delta}_L\| = \max_{i} \sqrt{ \lambda_{1}^{2}(i)\cos^2 \mathbf{\theta}_{u}(i) + \sin^2 \mathbf{\theta}_{u}(i)}, \nonumber \\ 
			& \|\bm{L}_{12}\| = \max_i \sqrt{\frac{(1-\lambda_{1}^{2}(i))^2 \cos^2 \mathbf{\theta}_{u}(i) \sin^2 u_i}{\lambda_{1}^{2}(i)\cos^2 \mathbf{\theta}_{u}(i) + \sin^2 \mathbf{\theta}_{u}(i)}},  \nonumber \\
			& \|\bm{I}_r - \bm{L}_{22} \| = \max_i \frac{\lambda_{1}(i)-\sqrt{\lambda_{1}^{2}(i)\cos^2 \mathbf{\theta}_{u}(i) + \sin^2 \mathbf{\theta}_{u}(i)}}{\sqrt{\lambda_{1}^{2}(i)\cos^2\mathbf{\theta}_{u}(i) + \sin^2 \mathbf{\theta}_{u}(i)}},  \nonumber \\
			& \label{eq:L11L12}\|[\bm{L}_{11} \quad  \bm{L}_{12}]\| = \max_i  \sqrt{  \frac{\lambda_{1}^{4}(i) \cos^2 \mathbf{\theta}_{u}(i) + \sin^2 \mathbf{\theta}_{u}(i)}{\lambda_{1}^{2}(i) \cos^2 \mathbf{\theta}_{u}(i) + \sin^2 \mathbf{\theta}_{u}(i) } } \\ 
			& \label{eq:L'} \|\bm L ^ \prime \|^{2} = \max_i {d}_i(\bm{\theta}_{u},\bm \lambda_{1},\bm \lambda_{2}) \\
			& \Big\|\begin{bmatrix}
				\bm{I}_{r} - \bm{L}_{22} & \\
				& \bm{I}_{r} - \bm{\Lambda}_{2}
			\end{bmatrix} \Big\| =  \max ~ \Big \{  \max_i (1-\lambda_{2}(i))\nonumber \\
			&‌ \quad \quad \quad \quad \quad \quad , \max_i \Big(1 - \frac{\lambda_{1}(i)}{\sqrt{\lambda_{1}^{2}(i)\cos^2 \mathbf{\theta}_{u}(i) + \sin^2 \mathbf{\theta}_{u}(i)}}\Big) \Big \}, 
		\end{align} 
		where $d_i$ is defined in \eqref{eq:d1} and \eqref{eq:d2}. The same  equalities hold for sub-blocks of $ \bm{R} $. 
		
		proof.  See Appendix \ref{proof lemma 4}.
	\end{lem}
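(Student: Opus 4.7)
The plan is to exploit the fact that every sub-block appearing in \eqref{37} is built from the diagonal ingredients $\bm{\Lambda}_1$, $\bm{\Lambda}_2$, $\sin\bm{\theta}_u$, $\cos\bm{\theta}_u$, and $\bm{\Delta}_L$, so every sub-block is itself a diagonal matrix. For any diagonal matrix $\bm{D}$, the operator norm satisfies $\|\bm{D}\| = \max_i |D_{ii}|$, which yields directly the formulas for $\|\bm{L}_{11}\|$, $\|\bm{L}_{12}\|$, and $\|\bm{I}_r - \bm{L}_{22}\|$ upon substituting the diagonal entries read off from $\bm{\Delta}_L$, $(\bm{I}-\bm{\Lambda}_1^2)\sin\bm{\theta}_u\cos\bm{\theta}_u \cdot \bm{\Delta}_L^{-1}$, and $\bm{\Lambda}_1\bm{\Delta}_L^{-1}$, respectively. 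The sign inside the absolute value for $\|\bm{I}_r - \bm{L}_{22}\|$ is fixed because $\lambda_1(i)\in[0,1]$ forces $\sqrt{\lambda_1^2(i)\cos^2\theta_u(i)+\sin^2\theta_u(i)}\ge \lambda_1(i)$, so the entries of $\bm{I}_r - \bm{L}_{22}$ are nonnegative.

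For the composite norm in \eqref{eq:L11L12}, I would use that both $\bm{L}_{11}$ and $\bm{L}_{12}$ are $r\times r$ diagonal, so the $r\times 2r$ matrix $[\bm{L}_{11}\ \bm{L}_{12}]$ has at most two nonzero entries per row---at columns $i$ and $r+i$---and these supports are column-disjoint across distinct rows. Its rows are therefore pairwise orthogonal, so the operator norm equals the largest row $\ell_2$-norm, namely $\max_i \sqrt{L_{11}(i,i)^2 + L_{12}(i,i)^2}$. The remaining step is the trigonometric identity
\begin{equation*}
\bigl(\lambda_1^2(i)\cos^2\theta_u(i) + \sin^2\theta_u(i)\bigr)^2 + \bigl(1-\lambda_1^2(i)\bigr)^2\sin^2\theta_u(i)\cos^2\theta_u(i) = \lambda_1^4(i)\cos^2\theta_u(i) + \sin^2\theta_u(i),
\end{equation*}
which is verified by expansion and the Pythagorean relation $\sin^2+\cos^2=1$; dividing by $\bm{\Delta}_L^2(i,i) = \lambda_1^2(i)\cos^2\theta_u(i) + \sin^2\theta_u(i)$ produces the right-hand side of \eqref{eq:L11L12}.

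For \eqref{eq:L'}, I would first identify $\bm{L}'$ from the surrounding argument, interpreting it as the submatrix of $\bm{L}$ (or of $\bm{L}-\bm{I}$) whose squared row-norms realize $d_i(\bm{\theta}_u,\bm{\lambda}_1,\bm{\lambda}_2)$ as given in \eqref{eq:d1}--\eqref{eq:d2}. The first summand of \eqref{eq:d1} is exactly $(L_{22}(i,i)-1)^2$, while the second summand is the squared $(i,i)$-entry of the rescaled coupling $(\bm{I}-\bm{\Lambda}_1)\sin\bm{\theta}_u\cos\bm{\theta}_u\cdot\bm{\Delta}_L^{-1}$ that arises when $\bm{L}$ is compared with $\bm{I}$ along the $\bm{L}_{22}$ block-row; \eqref{eq:d2} then captures the deviation of the $\bm{\Lambda}_2$ block from identity. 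Because the resulting matrix is again row-disjoint with diagonal entries, its squared operator norm is $\max_i d_i$. Finally, the block-diagonal matrix $\mathrm{diag}(\bm{I}_r-\bm{L}_{22},\bm{I}_{r'-r}-\bm{\Lambda}_2)$ satisfies $\|\mathrm{diag}(A,B)\| = \max(\|A\|,\|B\|)$, and both block norms are computed by the diagonal-matrix formula above.

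The main obstacle is the bookkeeping in the $\bm{L}'$ step: pinning down exactly which rows and sub-blocks of $\bm{L}$ enter $\bm{L}'$ so that the entrywise $\ell_2^2$ sum reproduces \eqref{eq:d1} verbatim, together with the trigonometric cancellation that collapses the numerator in \eqref{eq:L11L12}. Once these two computations are settled for $\bm{L}$, the analogous identities for $\bm{R}$ follow by symmetry, simply replacing $(\bm{\theta}_u,\bm{\Lambda}_1,\bm{\Lambda}_2,\bm{\Delta}_L)$ with $(\bm{\theta}_v,\bm{\Gamma}_1,\bm{\Gamma}_2,\bm{\Delta}_R)$ throughout.
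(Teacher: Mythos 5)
Your proposal matches the paper's own proof, which likewise reduces every sub-block norm to the largest absolute diagonal entry (via $\|\bm{X}\|=\sigma_{\max}(\bm{X})$), computes $\|[\bm{L}_{11} \ \bm{L}_{12}]\|$ from the row $\ell_2$-norms using the same trigonometric cancellation $(\lambda_1^2 c^2+s^2)^2+(1-\lambda_1^2)^2 s^2c^2=\lambda_1^4c^2+s^2$, and handles $\bm{L}'$ and the block-diagonal matrix by taking the maximum over disjointly supported diagonal blocks. Your observation that $\sqrt{\lambda_1^2(i)\cos^2\mathbf{\theta}_u(i)+\sin^2\mathbf{\theta}_u(i)}\ge\lambda_1(i)$, so that $\bm{I}_r-\bm{L}_{22}$ has nonnegative entries, is correct and in fact indicates that the numerator printed in the lemma's formula for $\|\bm{I}_r-\bm{L}_{22}\|$ should read $\sqrt{\lambda_1^2(i)\cos^2\mathbf{\theta}_u(i)+\sin^2\mathbf{\theta}_u(i)}-\lambda_1(i)$, consistent with the final display of the lemma.
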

	\subsection{Support Definitions}
	Suppose that $\bm{X}_r \in \mathbb{R}^{n \times n}$ is a rank-$r$ matrix obtained via the truncated SVD of $ \bm{X} $: 
	$$ \bm{X} = \bm{X}_r + \bm{X}_{r^+} = \bm{U}_r\overline{\bm{X}}_{r,11}\bm{V}_r^{\rm H} +‌ \bm{X}_{r^+},$$
	where $ \bm{U}_r $ and $ \bm{V}_r $ are some orthogonal bases of column and row spaces of $\bm{X}_r$, and therefore $ \overline{\bm{X}}_{r,11} $ is not necessarily diagonal. Also consider that  $ \bm{\mathcal{U}}_{r} = {\rm span}(\bm{U}_r)= {\rm span}(\bm{X}_r) $ and $ \bm{\mathcal{V}}_{r} = {\rm span}(\bm{V}_r)= {\rm span}(\bm{X}_r^{\rm H}) $ are column and row subspaces of $  \bm{X}_r$, respectively. Then the we define the support of $ \bm{X}_r $ by:  
	\begin{align}\label{44}
		&\bm{T} := \{\bm{Z}\in\mathbb{R}^{n\times n} : \bm{Z}=\bm{P}_{\bm{\mathcal{U}}_r}\bm{Z}\bm{P}_{\bm{\mathcal{V}}_r}+\bm{P}_{\bm{\mathcal{U}}_r}\bm{Z}\bm{P}_{\bm{\mathcal{V}}_r^\perp} \nonumber \\
		&\quad \quad \quad \quad \quad \quad 	+\bm{P}_{\bm{\mathcal{U}}_r^\perp}\bm{Z}\bm{P}_{\bm{\mathcal{V}}_r^\perp}\}   = {\rm supp}(\bm{X}_r),
	\end{align}
	and the orthogonal projection onto $\bm{T}$ and $\bm{T}^{\perp}$ as
	\begin{align}
		&\mathcal{P}_{\bm{T}}(\bm{Z}) = \bm{P}_{\bm{\mathcal{U}}}\bm{Z} +‌\bm{Z}\bm{P}_{\bm{\mathcal{V}}} - \bm{P}_{\bm{\mathcal{U}}}\bm{Z}\bm{P}_{\bm{\mathcal{V}}},  \mathcal{P}_{\bm{T}^{\perp}}(\bm{Z}) =  \bm{P}_{\bm{\mathcal{U}}^{\perp}}\bm{Z}\bm{P}_{\bm{\mathcal{V}}^{\perp }}.
	\end{align} 
	We can rewrite $ \bm{T} $  using Lemma \ref{Lem 3 } as 
	\begin{align}
		&\bm{T} = \Big\{\bm{Z}\in\mathbb{R}^{ n\times n} : \bm{Z}=\bm{B}_L \overline{\bm{Z}} \bm{B}_R^{\rm H}, \quad \overline{\bm{Z}}:=
		\begin{bmatrix}
			\overline{\bm{Z}}_{11} & \overline{\bm{Z}}_{12} \\
			\overline{\bm{Z}}_{21} & \bm{0}_{n-r}
		\end{bmatrix} \Big\}  \nonumber \\
		&=\bm{B}_L \overline{\bm{T}} \bm{B}_R^{\rm H},
	\end{align}
	where $ \overline{\bm{T}} \subset \mathbb{R}^{n \times n} $  is the support of $ \overline{\bm{X}}_r = \bm{B}_L^{\rm H} \bm{X}_r \bm{B}_R $:
	\begin{align}
		\overline{\bm{T}} = \{\overline{\bm{Z}}\in\mathbb{R}^{ m\times n} : \overline{\bm{Z}}:=	\begin{bmatrix}
			\overline{\bm{Z}}_{11} & \overline{\bm{Z}}_{12} \\
			\overline{\bm{Z}}_{21} & \bm{0}_{n-r}
		\end{bmatrix}\}.
	\end{align}
	For arbitrary 
	\begin{align}
		\overline{\bm{Z}}:=	\begin{bmatrix}
			\overline{\bm{Z}}_{11} & \overline{\bm{Z}}_{12} \\
			\overline{\bm{Z}}_{21} & \overline{\bm{Z}}_{22}
		\end{bmatrix} \in \mathbb{R}^{n\times n},
	\end{align}
	the orthogonal projection onto $ \overline{\bm{T}} $ and its complement $ \overline{\bm{T}}^{\perp} $ are  
	\begin{align}\label{47}
		&\mathcal{P}_{\overline{\bm{T}}}(\overline{\bm{Z}}) = \begin{bmatrix}
			\overline{\bm{Z}}_{11} & \overline{\bm{Z}}_{12} \\
			\overline{\bm{Z}}_{21} & \bm{0}_{n-r}
		\end{bmatrix},  \nonumber \\
		&\mathcal{P}_{\overline{\bm{T}}^{\perp}} (\overline{\bm{Z}})= \begin{bmatrix}
			\bm{0}_{r} & \\
			& \overline{\bm{Z}}_{22}
		\end{bmatrix}, 
	\end{align}
	respectively. When $ \bm{Z} = \bm{B}_L \overline{\bm{Z}} \bm{B}_R^{\rm H}  $, it follows that:   
	\begin{align}\label{48}
		& \mathcal{P}_{\bm{T}}(\bm{Z}) = \bm{B}_L \mathcal{P}_{\overline{\bm{T}}}(\overline{\bm{Z}})\bm{B}_R^{H}, \nonumber \\
		& \mathcal{P}_{\bm{T}^{\perp}}(\bm{Z}) = \bm{B}_L \mathcal{P}_{\overline{\bm{T}}^{\perp}}(\overline{\bm{Z}}) \bm{B}_R^{H}.
	\end{align}
	
	\section{Proof of Theorem \ref{our theorem}}
	\label{proof theorem 2}
	In Appendix \ref{Nullspace Property} we establish the null space property. Suppose $ \bm{H}:= \widehat{\bm{X}} - \bm{X} $ is  the error of solution to problem \eqref{ourproblem}. For the SVD decomposition $ \mathcal{ P}_{ \bm{T}^{ \perp}} (\bm{H}) = \hat{\bm{U}} \hat{\bm{\Sigma}} \hat{\bm{V}}^{\rm H}$, use the following partitions: 
	\begin{align}
		&\bm{\Sigma}_i = \hat{\bm{\Sigma}}[(i-1)\hat{r}+1 : i \hat{r} , (i-1)\hat{r}+1 : i \hat{r}] \in \mathbb{R}^{\hat{r} \times \hat{r}}  \nonumber \\
		& \bm{U}_i=\hat{\bm{U}}[:,(i-1)\hat{r}+1 : i \hat{r}] \in \mathbb{R}^{n \times \hat{r} }  \nonumber \\
		& \bm{V}_i=\hat{\bm{V}}[:,(i-1)\hat{r}+1 : i \hat{r}] \in \mathbb{R}^{n \times \hat{r} } \nonumber \\
		& \bm{H}_i := \bm{U}_i \bm{\Sigma}_i \bm{V}_i^{\rm H},
	\end{align}
	Decompose $\bm{H}$ as:
	\begin{align}\label{63} 
		\bm{H} = ‌\mathcal{P}_{\bm{T}^{\perp}}(\bm{H})  + \mathcal{P}_{\bm{T}}(\bm{H}) = \sum_{i \ge \bm{0} }\bm{H}_i,
	\end{align} 
	where $ \mathcal{P}_{\bm{T}}(\bm{H}) = \bm{H}_0$. Also, row and column spans of $\bm{H}_i$ and $ \bm{H}_j $ for $ i\neq j $ are orthogonal, in other words 
	\begin{align}
		\label{eq:orth}
		\bm{H}_i^{\rm H} \bm{H}_j = \bm{H}_j \bm{H}_i = \bm{0}_n,  \quad \quad i \neq j.
	\end{align} 
	Since $\bm{X}$ and $\widehat{\bm{X}}$ are in the feasible set of program \eqref{ourproblem}:
	\begin{align}\label{65}
		\| \mathcal{A}(\bm{H}) \|_{2} \le \| \mathcal{A}(\widehat{\bm{X}}) - \bm{y} \|_{2} + \| \mathcal{A}(\bm{X}) - \bm{y}  \|_{2} \le 2e.
	\end{align}
	Using \eqref{63}, \eqref{65} and triangle inequality, we will have:
	\begin{align}\label{66}
		&\| \mathcal{A}(\bm{H}_0 +\bm{H}_1) \|_{2} \le \sum_{i \ge 2 }\| \mathcal{A}(\bm{H}_i) \|_{2} + 2e.
	\end{align}
	Suppose that $\mathcal{A}$ satisfies \eqref{eq.RIP_A} with constant $\delta_{\tilde{r}}(\mathcal{A})$ for $ \tilde{r} \ge 2r + \hat{r} $: 
	\begin{align*}
		&{\rm rank }(\bm{H}_0 +\bm{H}_1) = {\rm rank }(\mathcal{P}_{\bm{T}}(\bm{H}) +\bm{H}_1) \le 2r + \hat{r} \le \tilde{r}  \nonumber \\ 
		&{\rm rank }(\bm{H}_i) \le \hat{r} \le \tilde{r} \quad : \quad i \ge 1.
	\end{align*}
	Combining \eqref{66} and \eqref{eq.RIP_A} yields
	\begin{align}\label{67}
		&(1-\delta_{\tilde{r}}(\mathcal{A}))\|\bm{H}_0 +\bm{H}_1 \|_{F} \le (1+\delta_{\tilde{r}}(\mathcal{A})) \sum_{i \ge 2 } \| \bm{H}_i\|_{F} +‌2e\nonumber \\ 
		&\le \frac{1+\delta_{\tilde{r}}(\mathcal{A})}{\sqrt{\hat{r}}} \sum_{i \ge 1 } \| \bm{H}_i\|_{*} +‌2e\stackrel{\eqref{eq:orth}}{=}  \frac{1+\delta_{ \tilde{r}}(\mathcal{A})}{\sqrt{\hat{r}}} \| \sum_{i \ge 1 }  \bm{H}_i\|_{*} + 2e \nonumber \\ 
		& =  \frac{1+\delta_{\tilde{r}}(\mathcal{A})}{\sqrt{\hat{r}}} \| \mathcal{P}_{\bm{T}^{\perp}}(\bm{H}) \|_{*} + 2e,
	\end{align} 
	Using null space property in \eqref{62}, we find that 
	\begin{align}
		\label{eq:h0+h1}
		&\|\bm{H}_0 +\bm{H}_1 \|_{F} \le \frac{1+\delta_{\tilde{r}}(\mathcal{A})}{(1-\delta_{\tilde{r}}(\mathcal{A})) \sqrt{\hat{r}}}  \| \mathcal{P}_{\bm{T}^{\perp}}(\bm{H}) \|_{*} + \frac{2e}{ 1-\delta_{\tilde{r}}(\mathcal{A})} \nonumber \\  
		& \le \frac{1+\delta_{\tilde{r}}(\mathcal{A})}{(1-\delta_{\tilde{r}}(\mathcal{A})) \sqrt{\hat{r}} }  (\alpha_3 \|\mathcal{P}_{\bm{T}}(\bm{H})\|_{*} + \alpha_4 \|\mathcal{P}_{\widetilde{\bm{T}}}(\bm{H})\|_{*} +‌2 \|\bm{X}_{r^{+}}\|_{*}) \nonumber \\ 
		&\quad \quad \quad \quad \quad \quad    +  \frac{2e}{1-\delta_{\tilde{r}}(\mathcal{A})}.
	\end{align}
	Since $ \mathcal{P}_{\bm{T}}(\bm{H}) = \bm H_0$ and $ {\rm rank}(\bm H_0) \le 2r$ by \eqref{44}, we will have $\|\bm H_0\|_* \leq \sqrt{2r}  \|\bm H_0\|_F$. The same is true for $ \mathcal{P}_{\widetilde{\bm{T}}}(\bm{H})$ by \eqref{60}. Next, $\bm H_1$ contains the $\hat{r}$ most powerful modes of $\mathcal{P}_{\bm{T}^{\perp}}(\bm{H})$ while for $ \widetilde{\bm{T}} \subset \bm{T}^{\perp}$, the projection  $ \mathcal{ P}_{ \widetilde{ \bm{T}}}(\bm{H})$ contains $2r \leqslant \hat{r}$ of its modes. Therefore, $\|\mathcal{P}_{ \widetilde{ \bm{T}}}(\bm{H})\|_F \leqslant \| \bm{H}_1\|_{F}$. Cauchy-Schwarz inequality can be used to obtain $\alpha_3 \| \bm{H}_0\|_{F} + \alpha_4 \| \bm{H}_1\|_{F} \leq \sqrt{\alpha_3^2 + \alpha_4^2}.\sqrt{\| \bm{H}_0\|_{F}^2 + \| \bm{H}_1\|_{F}^2}$. The orthogonality of $\bm H_i$ matrices in \eqref{eq:orth} allows us to write $ \| \bm{H}_0\|_F^2 + \| \bm{H}_1\|_F^2 = \| \bm{H}_0+ \bm{H}_1\|_F^2 $. Overall, \eqref{eq:h0+h1} will become:
	\begin{align}
		\label{68}
		\| \bm{H}_0+ \bm{H}_1\|_{F} & \leq  \frac{ 1+\delta_{\tilde{r}}(\mathcal{A})}{ 1-\delta_{\tilde{r}}(\mathcal{A})} \frac{\sqrt{2r}}{\sqrt{\hat{r}}} \sqrt{\alpha_3^2 + \alpha_4^2} \| \bm{H}_0+ \bm{H}_1\|_{F} \nonumber \\
		& + \frac{1+\delta_{\tilde{r}}(\mathcal{A})}{1-\delta_{\tilde{r}}(\mathcal{A})} \frac{2}{\sqrt{\hat{r}}} \|\bm{X}_{r^{+}}\|_{*}   +  \frac{2e}{1-\delta_{\tilde{r}}(\mathcal{A})},
	\end{align} 
	which can be written as:        
	\begin{align}
		\label{eq:h0+h1_frac}
		\|\bm{H}_0 +\bm{H}_1 \|_{F} \le  \frac{\frac{ 1+ \delta_{ \tilde{r}}(\mathcal{A})}{ 1- \delta_{ \tilde{r}}(\mathcal{A})} \frac{2}{\sqrt{\hat{r}}} \|\bm{X}_{r^{+}}\|_{*} + \frac{2e}{ 1-\delta_{\tilde{r}} }(\mathcal{A})}{ 1-\frac{1+ \delta_{ \tilde{r}}(\mathcal{A})}{1-\delta_{\tilde{r}}(\mathcal{A})} \sqrt{ \frac{2r}{\hat{r}} ( \alpha_3^2 + \alpha_4^2 )}	},
	\end{align} 
	when the denominator is positive which requires:
	\begin{align}\label{70}
		\delta_{\tilde{r}}(\mathcal{A}) \le	\frac{1- \sqrt{ \frac{2r}{\hat{r}} ( \alpha_3^2 + \alpha_4^2 )} }{ 1 + \sqrt{ \frac{2r}{\hat{r}} ( \alpha_3^2 + \alpha_4^2 )}}.
	\end{align}
	We return to the proof of the theorem by noting that:
	\begin{align}\label{72}
		&\| \sum_{i \ge 2 } \bm{H}_i \|_{F} \le \sum_{i \ge 2 } \| \bm{H}_i \|_{F} \le \frac{\|\mathcal{P}_{\bm{T}^{\perp}}(\bm{H}) \|_{*}}{\sqrt{\hat{r}}} \nonumber \\
		&\le \sqrt{ \frac{2r}{\hat{r}} ( \alpha_3^2 + \alpha_4^2 )} \| \bm{H}_0+ \bm{H}_1\|_{F} + ‌2 \frac{1}{\sqrt{\hat{r}}} \|\bm{X}_{r^{+}}\|_{*}  \nonumber \\ 
		& \stackrel{\eqref{eq:h0+h1_frac}}{\le}  \frac{ \frac{2}{\sqrt{\hat{r}}} \|\bm{X}_{r^{+}}\|_{*} + \frac{2e}{ 1-\delta_{\tilde{r}} (\mathcal{A})} \sqrt{ \frac{2r}{\hat{r}} ( \alpha_3^2 + \alpha_4^2 )} }{ 1-\frac{1+ \delta_{ \tilde{r}}(\mathcal{A})}{1-\delta_{\tilde{r}}(\mathcal{A})} \sqrt{ \frac{2r}{\hat{r}} ( \alpha_3^2 + \alpha_4^2 )}	}.
	\end{align} 
	Finally, \eqref{68} and \eqref{72} yield:    
	\begin{align}
		&\|\widehat{\bm{X}} - \bm{X} \|_{F} = \|\bm{H} \|_{F} \le \| \bm{H}_0+ \bm{H}_1\|_{F} + \| \sum_{i \ge 2 } \bm{H}_i \|_{F} \nonumber \\ 
		& \le \frac{ \frac{ 4 }{ ( 1- \delta_{ \tilde{r}}(\mathcal{A})) \sqrt{ \hat{r}}} \|\bm{X}_{r^{+}}\|_{*} + \frac{2e}{ 1-\delta_{\tilde{r}}(\mathcal{A})} \left( 1 + \sqrt{ \frac{2r}{\hat{r}} ( \alpha_3^2 + \alpha_4^2 )} \right) }{ 1-\frac{1+ \delta_{ \tilde{r}}(\mathcal{A})}{1-\delta_{\tilde{r}}(\mathcal{A})} \sqrt{ \frac{2r}{\hat{r}} ( \alpha_3^2 + \alpha_4^2 )}	},
	\end{align}
	when $\tilde{r} \ge 2r + \hat{r}$ and \eqref{70} is met. Taking $\hat{r} = 30r$ and $ \tilde{r} = 32r$ the proof of Theorem \ref{our theorem} is complete.

	\section{Null Space Property}
	\label{Nullspace Property}
	Suppose that $ \widehat{\bm{X}} $ and $ \bm{H}:= \widehat{\bm{X}} - \bm{X} $ are  the solution and its error of problem \eqref{ourproblem}. We have  
	\begin{align}\label{49}
		\| \bm{Q}_{\widetilde{\bm{\mathcal{U}}}_{r^{\prime}}} (\bm{X}+\bm{H}) \bm{Q}_{\widetilde{\bm{\mathcal{V}}}_{r^{\prime}}} \|_{*} \le  
		\| \bm{Q}_{\widetilde{\bm{\mathcal{U}}}_{r^{\prime}}} \bm{X} \bm{Q}_{\widetilde{\bm{\mathcal{V}}}_{r^{\prime}}} \|_{*}.
	\end{align}
	The R.H.S. is bounded as follows:
	\begin{align} \label{50}
		&\| \bm{Q}_{\widetilde{\bm{\mathcal{U}}}_{r^{\prime}}} \bm{X} \bm{Q}_{\widetilde{\bm{\mathcal{V}}}_{r^{\prime}}} \|_{*} \le \| \bm{Q}_{\widetilde{\bm{\mathcal{U}}}_{r^{\prime}}}  \bm{X}_r \bm{Q}_{\widetilde{\bm{\mathcal{V}}}_{r^{\prime}}} \|_{*} + \| \bm{Q}_{\widetilde{\bm{\mathcal{U}}}_{r^{\prime}}}  \bm{X}_{r^{+}} \bm{Q}_{\widetilde{\bm{\mathcal{V}}}_{r^{\prime}}} \|_{*} \nonumber \\
		& \stackrel{\eqref{42}}{=} \|  \bm{B}_L \bm{O}_L \begin{bmatrix}
			\bm{L}_{11}\overline{\bm{X}}_{r,11}\bm{R}_{11} & \\
			&‌\!\!\! \!\!\! \bm{0}
		\end{bmatrix} \bm{O}_R^{\rm H} \bm{B}_R^{\rm H}\|_{*} + \| \bm{Q}_{\widetilde{\bm{\mathcal{U}}}_{r^{\prime}}} \bm{X}_{r^{+}} \bm{Q}_{\widetilde{\bm{\mathcal{V}}}_{r^{\prime}}} \|_{*}   \nonumber \\
		&= \Bigg\|
		\begin{bmatrix}
			\bm{L}_{11}\overline{\bm{X}}_{r,11}\bm{R}_{11} & \\
			&‌\bm{0}
		\end{bmatrix}
		\Bigg\| _{*}  + \|\bm{Q}_{\widetilde{\bm{\mathcal{U}}}_{r^{\prime}}}  \bm{X}_{r^{+}} \bm{Q}_{\widetilde{\bm{\mathcal{V}}}_{r^{\prime}}} \|_{*}  \nonumber \\
		& =\|\bm{L}\overline{\bm{X}}\bm{R}^{\rm H} \|_{*}‌ + \| \bm{Q}_{\widetilde{\bm{\mathcal{U}}}_{r^{\prime}}} \bm{X}_{r^{+}} \bm{Q}_{\widetilde{\bm{\mathcal{V}}}_{r^{\prime}}} \|_{*},
	\end{align}
	The L.H.S. of the \eqref{49} can be bounded as:
	\begin{align}
		& \| \bm{Q}_{\widetilde{\bm{\mathcal{U}}}_{r^{\prime}}} ( \bm{X}+\bm{H}) \bm{Q}_{\widetilde{\bm{\mathcal{V}}}_{r^{\prime}}} \|_{*}  \nonumber \\
		& \ge  \| \bm{Q}_{\widetilde{\bm{\mathcal{U}}}_{r^{\prime}}} ( \bm{X}_r+\bm{H}) \bm{Q}_{\widetilde{\bm{\mathcal{V}}}_{r^{\prime}}} \|_{*} -\| \bm{Q}_{\widetilde{\bm{\mathcal{U}}}_{r^{\prime}}}  \bm{X}_{r^{+}} \bm{Q}_{\widetilde{\bm{\mathcal{V}}}_{r^{\prime}}} \|_{*}    \nonumber \\
		& \stackrel{\eqref{40}}{=}   \|\bm{L} \overline{ \bm{X}}_r \bm{R}^{ \rm H} + \bm{L}\overline{\bm{H}}\bm{R}^{\rm H}\|_{*} - \| \bm{Q}_{\widetilde{\bm{\mathcal{U}}}_{r^{\prime}}}  \bm{X}_{r^{+}} \bm{Q}_{\widetilde{\bm{\mathcal{V}}}_{r^{\prime}}} \|_{*}  \nonumber \\
		& = \|\bm{L} (\overline{\bm{X}}_r + \mathcal{P}_{ \overline{ \bm{T}}}( \overline{\bm{H}}) + \mathcal{P}_{ \overline{ \bm{T}}^{ \perp}}(\overline{\bm{H}}) ) \bm{R}^{\rm H} \nonumber \\ 
		& \quad \quad \quad -\mathcal{P}_{\overline{\bm{T}}^{\perp}}(\overline{\bm{H}})+ \mathcal{P}_{\overline{\bm{T}}^{\perp}}(\overline{\bm{H}}) ‌\|_{*} -  \| \bm{Q}_{\widetilde{\bm{\mathcal{U}}}_{r^{\prime}}}  \bm{X}_{r^{+}} \bm{Q}_{\widetilde{\bm{\mathcal{V}}}_{r^{\prime}}} \|_{*}. 
	\end{align}
	Using the definitions in \eqref{47}, we will have:   
	\begin{align}
		&\| \bm{Q}_{\widetilde{\bm{\mathcal{U}}}_{r^{\prime}}} ( \bm{X}+\bm{H}) \bm{Q}_{\widetilde{\bm{\mathcal{V}}}_{r^{\prime}}} \|_{*} \ge  \|\bm{L} ( \overline{\bm{X}}_r  + \mathcal{ P}_{ \overline{ \bm{ T}}} ( \overline{ \bm{H}}) + \overline{\bm{H}}^{\prime \prime}) \bm{R}^{\rm H} -  \nonumber \\
		& \overline{\bm{H}}^{\prime \prime} + \mathcal{P}_{ \overline{ \bm{T}}^{ \perp}}(\overline{\bm{H}})‌\|_{*} - \| \bm{Q}_{\widetilde{\bm{\mathcal{U}}}_{r^{\prime}}}  \bm{X}_{r^{+}} \bm{Q}_{\widetilde{\bm{\mathcal{V}}}_{r^{\prime}}}  \|_{*} = \|\bm{L} ( \overline{\bm{X}}_r + \mathcal{P}_{ \overline{ \bm{T}}}( \overline{ \bm{H}}) + \nonumber \\ 
		&  \overline{\bm{H}}^{\prime} ) \bm{R}^{\rm H}-\overline{\bm{H}}^{ \prime} + \mathcal{P}_{\overline{\bm{T}}^{\perp}}(\overline{\bm{H}})‌\|_{*} - ‌\| \bm{Q}_{\widetilde{\bm{\mathcal{U}}}_{r^{\prime}}}  \bm{X}_{r^{+}} \bm{Q}_{\widetilde{\bm{\mathcal{V}}}_{r^{\prime}}}  \|_{*} \ge  \nonumber \\ 
		& \|\bm{L} ( \overline{\bm{X}}_r + \mathcal{P}_{ \overline{\bm{T}}}( \overline{\bm{H}}) ) \bm{R}^{\rm H} + \mathcal{P}_{\overline{\bm{T}}^{\perp}}(\overline{\bm{H}}) \|_{*} - \|\bm{L} \mathcal{P}_{ \overline{ \bm{T}}}( \overline{ \bm{H}}) \bm{R}^{\rm H} \|_{*} - \nonumber \\ 
		& \| \bm{L}\overline{\bm{H}}^{\prime} \bm{R}^{\rm H}-\overline{\bm{H}}^{\prime} \|_{*} - \|\bm{Q}_{\widetilde{\bm{\mathcal{U}}}_{r^{\prime}}}  \bm{X}_{r^{+}} \bm{Q}_{\widetilde{\bm{\mathcal{V}}}_{r^{\prime}}}  \|_{*},
	\end{align} 
	where we have defined:
	\small{\begin{align} \label{51}
			\overline{\bm{H}}^{\prime} \!\!\!\! := \!\!\! \begin{bmatrix}
				\bm{0} \!\!\!\!&  & & \\
				& \overline{\bm{H}}_{22} & \overline{\bm{H}}_{23} & \overline{\bm{H}}_{23} \\
				& \overline{\bm{H}}_{32} &‌\overline{\bm{H}}_{33} & \overline{\bm{H}}_{34} \\
				& \overline{\bm{H}}_{42} & \overline{\bm{H}}_{43} & 
				\bm{0}
			\end{bmatrix} , \, 
			\overline{\bm{H}}^{\prime \prime} \!\!\! := \!\!\! \begin{bmatrix}
				\bm{0} \!\!\!\!&  & & \\
				& \overline{\bm{H}}_{22} & \overline{\bm{H}}_{23} & \overline{\bm{H}}_{23} \\
				& \overline{\bm{H}}_{32} &‌\overline{\bm{H}}_{33} & \overline{\bm{H}}_{34} \\
				& \overline{\bm{H}}_{42} & \overline{\bm{H}}_{43} & 
				\overline{\bm{H}}_{44} 
			\end{bmatrix}.
	\end{align}}
	\normalfont
	Consequently, we have:
	\begin{align} \label{52}
		&\| \bm{Q}_{\widetilde{\bm{\mathcal{U}}}_{r^{\prime}}} ( \bm{X}+\bm{H}) \bm{Q}_{\widetilde{\bm{\mathcal{V}}}_{r^{\prime}}} \|_{*} \stackrel{\eqref{42}}{\ge} \| \textrm {diag} [ \bm{L}_{11} \overline{\bm{X}}_{ r,11} \bm{R}_{ 11} \; , \; \bm{0}_{n-r} ]  +‌ \nonumber \\
		& \overline{ \bm{H}}^{\prime \prime} \|_* - \|\bm{L}\mathcal{P}_{\overline{\bm{T}}}(\overline{\bm{H}})\bm{R}^{\rm H} \|_{*} - \| \bm{L}\overline{\bm{H}}^{\prime} \bm{R}^{\rm H}-\overline{\bm{H}}^{ \prime } \|_{*} -  \nonumber \\
		&  \| \bm{Q}_{\widetilde{\bm{\mathcal{U}}}_{r^{\prime}}}  \bm{X}_{r^{+}} \bm{Q}_{\widetilde{\bm{\mathcal{V}}}_{r^{\prime}}}  \|_{*} = \| \textrm {diag} [ \bm{L}_{11} \overline{\bm{X}}_{ r,11} \bm{R}_{ 11} \; , \; \bm{0} ]\|_* +‌ \| \overline{\bm{H}}^{\prime \prime} \|_* - \nonumber \\
		& \| \bm{L} \mathcal{P}_{ \overline{ \bm{T}}}( \overline{ \bm{H}}) \bm{R}^{\rm H} \|_{*} - \| \bm{L}\overline{\bm{H}}^{ \prime }\bm{R}^{\rm H}-\overline{\bm{H}}^{ \prime } \|_{*}  - \|\bm{Q}_{\widetilde{\bm{\mathcal{U}}}_{r^{\prime}}}  \bm{X}_{r^{+}} \bm{Q}_{\widetilde{\bm{\mathcal{V}}}_{r^{\prime}}} \|_{*}\nonumber \\
		& \stackrel{\eqref{47}}{=} \|\bm{L}\overline{\bm{X}}\bm{R}^{\rm H} \|_{*}‌  +‌\| \mathcal{P}_{\overline{\bm{T}}^{\perp}}(\overline{\bm{H}}) \|_{*} -\| \bm{L}\mathcal{P}_{\overline{\bm{T}}}(\overline{\bm{H}})\bm{R}^{\rm H}\|_{*}   \nonumber \\
		& -\| \bm{L}\overline{\bm{H}}^{\prime }\bm{R}^{\rm H}-\overline{\bm{H}}^{\prime } \|_{*} - \| \bm{Q}_{\widetilde{\bm{\mathcal{U}}}_{r^{\prime}}}  \bm{X}_{r^{+}} \bm{Q}_{\widetilde{\bm{\mathcal{V}}}_{r^{\prime}}} \|_{*},
	\end{align} 
	where we have used the fact that $ \| \bm{A} + \bm{B}\|_{*} = \| \bm{A}\|_{*} + \|\bm{B} \|_{*}$ when the column and row spaces of $ \bm{A} $ are orthogonal to $ \bm{B} $. Combining \eqref{49} with the upper and lower bounds in \eqref{50} and \eqref{52} yields:
	\begin{align}\label{53}
		&‌\| \mathcal{P}_{\overline{\bm{T}}^{\perp}}(\overline{\bm{H}}) \|_{*} \le \| \bm{L}\mathcal{P}_{\overline{\bm{T}}}(\overline{\bm{H}})\bm{R}^{\rm H}\|_{*} + \| \bm{L}\overline{\bm{H}}^{\prime} \bm{R}^{\rm H}-\overline{\bm{H}}^{\prime} \|_{*} \nonumber \\
		&\quad \quad \quad \quad \quad \quad \quad \quad \quad \quad \quad + 2\| \bm{Q}_{\widetilde{\bm{\mathcal{U}}}_{r^{\prime}}}  \bm{X}_{r^{+}} \bm{Q}_{\widetilde{\bm{\mathcal{V}}}_{r^{\prime}}}  \|_{*}.
	\end{align}
	Note that
	\begin{align}
		&\textrm {diag} [ \bm{0}_r ,  \bm{L}_{22} , \bm{\Lambda}_2 , \bm{I} ]‌\mathcal{ P}_{ \overline{ \bm{T}}}( \overline{\bm{H}})  \textrm {diag} [ \bm{0}_r ,  \bm{R}_{22} , \bm{\Gamma}_2 , \bm{I} ] \stackrel{\eqref{47}}{=} \nonumber \\ 
		& \textrm {diag} [ \bm{0}_r , \bm{L}_{22} , \bm{\Lambda}_2 , \bm{I} ] 
		\small{\begin{bmatrix}
				\overline{\bm{H}}_{11} & \overline{\bm{H}}_{12} &‌\overline{\bm{H}}_{13} &  ‌\overline{\bm{H}}_{14}\\
				\overline{\bm{H}}_{21} &   \\
				\overline{\bm{H}}_{31} &   & \bm{0}_{n-r}&  \\
				‌\overline{\bm{H}}_{41} &
		\end{bmatrix} }
		\normalfont \nonumber\\ 
		&  \quad \quad \quad  \textrm {diag} [ \bm{0}_r ,  \bm{R}_{22} , \bm{\Gamma}_2 , \bm{I} ]   = \bm{0}_n.
	\end{align}
	In the R.H.S. of \eqref{53}, we will have:
	\begin{align}
		\label{55}
		& \| \bm{L} \mathcal{P}_{ \overline{ \bm{T}}}( \overline{ \bm{H}}) \bm{R}^{\rm H} \|_{*}  = \| \bm{ L} \mathcal{P}_{ \overline{ \bm{T}}}( \overline{ \bm{H}}) \bm{R}^{\rm H} - \nonumber \\  
		&  \textrm {diag} [ \bm{0}_r ,  \bm{L}_{22} , \bm{\Lambda}_2 , \bm{I} ]‌\mathcal{ P}_{ \overline{ \bm{T}}}( \overline{\bm{H}})  \textrm {diag} [ \bm{0}_r ,  \bm{R}_{22} , \bm{\Gamma}_2 , \bm{I} ] \|_* \stackrel{\eqref{56}}{=} \nonumber \\
		&\Bigg\|
		\begin{bmatrix}
			\bm{L}_{11} & \bm{L}_{12} & \\
			& \bm{0}_r & \\
			& &\bm{0} 
		\end{bmatrix}
		\mathcal{P}_{\overline{\bm{T}}}(\overline{\bm{H}}) \bm{R}^{\rm H}+
		\nonumber \\
		& \textrm {diag} [ \bm{0}_r ,  \bm{L}_{22} , \bm{\Lambda}_2 , \bm{I} ] \mathcal{P}_{\overline{\bm{T}}}(\overline{\bm{H}})
		\begin{bmatrix}
			\bm{R}_{11} & &\\
			\bm{R}^{\rm H}_{12}& \bm{0} & \\
			& &\bm{0} 
		\end{bmatrix} \Bigg\|_{*}
		\nonumber \\
		& \le \Bigg\|
		\begin{bmatrix}
			\bm{L}_{11} & \bm{L}_{12}  & \\
			& \bm{0}_r  & \\
			& &\bm{0} 
		\end{bmatrix}
		\mathcal{P}_{\overline{\bm{T}}}(\overline{\bm{H}}) \bm{R}^{\rm H}
		\Bigg\|_{*}   \nonumber \\
		& + \Bigg\|
		\textrm {diag} [ \bm 0_r , \bm L_{22} , \bm \Lambda_2 , \bm I ]
		\mathcal{P}_{\overline{\bm{T}}}(\overline{\bm{H}}) \begin{bmatrix}
			\bm{R}_{11} & & \\
			\bm{R}^{\rm H}_{12}& \bm{0}_r  & \\
			& &\bm{0}  
		\end{bmatrix}
		\Bigg\|_{*} \nonumber \\ 
		&\le \|[\bm{L}_{11} \quad  \bm{L}_{12}]\|\|\mathcal{P}_{\overline{\bm{T}}}(\overline{\bm{H}})\|_{*} \|\bm{R}\| +   \nonumber \\ 
		& \quad \quad \max [\|\bm{L}_{22}\|, \| \bm{\Lambda}_2 \| ,1] ‌\|\mathcal{P}_{ \overline{ \bm{T}}}(\overline{\bm{H}})\|_{*}\|[\bm{R}_{11} \quad  \bm{R}_{12}]\| \nonumber \\
		& \le ‌(\|[\bm{L}_{11} \quad  \bm{L}_{12}]\| +‌\|[\bm{R}_{11} \quad  \bm{R}_{12}]\|) \|\mathcal{P}_{\overline{\bm{T}}}(\overline{\bm{H}})\|_{*}.
	\end{align}
	The second inequality uses the polarization identity:
	\begin{align}\label{56}
		\bm{AZC} - \bm{BZD} = (\bm{A-B})\bm{ZC} + \bm{BZ}(\bm{C-D})
	\end{align}
	and $ \|\bm{AB} \|_{*} \le \|\bm{A}\| \| \bm{B}\|_{*} $. Also, the last line used \eqref{37}, \eqref{38}, and the fact that $ \| \bm{L}_{22} \| \le \bm{L}$. First, define the following matrices:
	\begin{align}
		& \bm L ^ \prime := 
		\small{	\begin{bmatrix}
				\bm{0} & \bm{L}_{12} & &  \\
				& \bm{L}_{22}-\bm{I} & & \\
				& & \bm{\Lambda}_2 - \bm{I} & \\
				& & &\bm{0}
		\end{bmatrix}}\nonumber \\
		& \bm R^ \prime := 
		\small{\begin{bmatrix}
				\bm{0} & & & \\
				\bm{R}^{\rm H}_{12}& \bm{R}_{22}-\bm{I} & & \\
				& & \bm{\Gamma}_2 - \bm{I} & \\ 
				&  &  & \bm{0}
			\end{bmatrix}.}
	\end{align}
	\normalfont
	Now we upper bound the second term of \eqref{53}:
	\begin{align}
		&\| \bm{L}\overline{\bm{H}}^{\prime}\bm{R}^{\rm H}-\overline{\bm{H}}^{\prime} \|_{*} = \Bigg \| \bm{L} \overline{ \bm{H}}^{\prime}\bm{R}^{\rm H} -\begin{bmatrix}
			\bm{L}_{11} \!\!\!\!\!\!  &   \\
			&‌\bm{I}  \\
		\end{bmatrix}
		\overline{\bm{H}}^{\prime}\begin{bmatrix}
			\bm{R}_{11} \!\!\!  \!\!\! &  \\
			&‌\bm{I} 
		\end{bmatrix}\Bigg\|_{*} \nonumber \\
		& \stackrel{\eqref{56}}{\le} \| \bm L ^ \prime
		\overline{\bm{H}}^{\prime}
		\bm{R}^{\rm H} \|_* + \Bigg\|
		\begin{bmatrix}
			\bm{L}_{11} &  \\
			&‌\bm{I} 
		\end{bmatrix}
		\overline{\bm{H}}^{\prime} \bm R^ \prime
		\Bigg\|_{*} 
		\label{57}
		\le \| \bm L ^ \prime \| \|\overline{\bm{H}}^{ \prime} \|_{*} \|\bm{R}\|  \nonumber \\
		& + \max\{\|\bm{L}_{11}\| ,1\}
		\|\overline{\bm{H}}^{\prime} \|_{*} \| \bm R^ \prime
		\| \stackrel{\eqref{38}}{\le} ( \| \bm L ^ \prime \| + \| \bm R^ \prime \| )\| \overline{\bm{H}}^{\prime} \|_{*},
	\end{align}
	where we have used the fact that $  \|\bm{AB} \|_{*} \le \|\bm{A}\| \| \bm{B}\|_{*} $ and $\| \bm{L}_{11}  \| \le \| \bm{L}\| $. Replace \eqref{55} and \eqref{57} back into \eqref{53}: 
	\begin{align}\label{58}
		&‌ \|  \mathcal{ P}_{ \overline{ \bm{T}}^{ \perp}}( \overline{ \bm{H}}) \|_* \le ‌(\|[\bm{L}_{11} \quad  \bm{L}_{12}]\| +‌\|[\bm{R}_{11} \quad  \bm{R}_{12}]\|) \|\mathcal{P}_{\overline{\bm{T}}}(\overline{\bm{H}})\|_{*}   \nonumber \\
		& + (\| \bm L ^ \prime \| + \| \bm R^ \prime \| ) \|\overline{\bm{H}}^{\prime}\|_* + 2\| \bm{Q}_{\widetilde{\bm{\mathcal{U}}}_{r^{\prime}}}  \bm{X}_{r^{+}} \bm{Q}_{\widetilde{\bm{\mathcal{V}}}_{r^{\prime}}}  \|_{*} \stackrel{\eqref{eq:L11L12} , \eqref{eq:L'} , \eqref{eq:a3} , \eqref{eq:a4}}{=}  \nonumber \\
		& \alpha_3 \|\mathcal{P}_{ \overline{ \bm{T}}}( \overline{\bm{H}})\|_{*} + \alpha_4 \|\overline{\bm{H}}^{\prime} \|_{*} + 2\| \bm{Q}_{\widetilde{\bm{\mathcal{U}}}_{r^{\prime}}}  \bm{X}_{r^{+}} \bm{Q}_{\widetilde{\bm{\mathcal{V}}}_{r^{\prime}}}  \|_*.
	\end{align}
	According to \eqref{48} and the rotational invariance of the nuclear norm, it holds that 
	\begin{align} 
		\label{eq:h bar}
		&\|\mathcal{P}_{\overline{\bm{T}}}(\overline{\bm{H}})\|_{*} = \|\mathcal{P}_{\bm{T}}(\bm{H})\|_{*}, ‌\| \mathcal{P}_{\overline{\bm{T}}^{\perp}}(\overline{\bm{H}}) \|_{*} = ‌\| \mathcal{P}_{\bm{T}^{\perp}}(\bm{H}) \|_{*}.
	\end{align}
	Also, we define linear subspace $\widetilde{\bm{T}} \subset \bm{T}^{\perp} $:
	\begin{align}\label{60}
		&\widetilde{\bm{T}} := \Big\{\bm{Z}\in\mathbb{R}^{ n\times n} : \bm{Z}=\bm{B}_L \small{\begin{bmatrix}
				\bm 0 & & \\
				&\overline{\bm{Z}}_{22} & \overline{\bm{Z}}_{23}&  \overline{\bm{Z}}_{24}\\ 
				&\overline{\bm{Z}}_{32} &  \overline{\bm{Z}}_{33}& \overline{\bm{Z}}_{34} \\
				& \overline{\bm{Z}}_{42} & \overline{\bm{Z}}_{43} & \bm 0
		\end{bmatrix}}
		\normalfont \bm{B}_R^{\rm H} \Big\}.
	\end{align}
	Rotational invariance of the nuclear norm  and definition of $\overline{\bm{H}}^{\prime} $ in \eqref{51} yields:  
	\begin{align}
		\label{eq:h''}
		\|\overline{\bm{H}}^{\prime} \|_{*} = \|\bm{B}_L\overline{\bm{H}}^{\prime} \bm{B}_R^{\rm H}\|‌_{*} = \|\mathcal{P}_{\widetilde{\bm{T}}}(\bm{H})\|_{*}.
	\end{align}
	Finally, we rewrite \eqref{58} using \eqref{eq:h bar} and \eqref{eq:h''} as
	\begin{align}\label{62}
		‌\| \mathcal{P}_{\bm{T}^{\perp}}(\bm{H}) \|_{*} \le  \alpha_3 \|\mathcal{P}_{\bm{T}}(\bm{H})\|_{*} + \alpha_4 \|\mathcal{P}_{\widetilde{\bm{T}}}(\bm{H})\|_{*}+‌2 \|\bm{X}_{r^{+}}\|_{*},
	\end{align}
	where we have used the fact that $ \|\bm{AB} \|_{*} \le \| \bm{A}\| \|\bm{B} \|_{*}$ besides \eqref{38}.

	\section{Proof of Lemma \ref{lem 4}}\label{proof lemma 4}
	We use the fact that the operator norm of a diagonal matrix is its largest element. Also, for $ \bm{X} \in \mathbb{R}^{n \times n }$ 	
	\begin{equation}
		\| \bm{X} \| =  \sqrt{ \lambda_{ \max}( \bm{X}^{\rm H}\bm{X})} = \sigma_{ \max}( \bm{X}),
	\end{equation}
	where $\lambda_{\max}(\cdot)$ is the largest eigenvalue and $\sigma_{ \max}( \cdot )$ the largest singular value of a matrix.
	\begin{align}
		& \| \bm{L}_{11} \| = \| \bm{\Delta}_L\| = \max_{i} \sqrt{ \lambda_{1}^{2}(i)\cos^2 \mathbf{\theta}_{u}(i) + \sin^2 \mathbf{\theta}_{u}(i)}  \nonumber \\
		& \|\bm{L}_{12}\| = \max_{i} \sqrt{ \frac{( 1 - \lambda_{1}^{2} (i))^2 \cos^2 \mathbf{\theta}_{u}(i) + \sin^2 \mathbf{\theta}_{u}(i)}{\lambda_{1}^{2}(i)\cos^2 \mathbf{\theta}_{u}(i) + \sin^2 \mathbf{\theta}_{u}(i)} }   \nonumber \\
		& \|\bm I - \bm{L}_{22} \| = \| \bm I - \bm{\Lambda}\bm{\Delta}_L^{-1}\| = \nonumber \\
		& \max_i \frac{  \lambda_{1}(i) - \sqrt{ \lambda_i^2 \cos^2 \mathbf{\theta}_{u}(i) + \sin^2 \mathbf{\theta}_{u}(i)}} { \sqrt{ \lambda_{1}^{2} (i) \cos^2 \mathbf{ \theta}_{u}(i) + \sin^2 \mathbf{\theta}_{u}(i)}}  \nonumber \\
		& \Big\|
		\begin{bmatrix}
			\bm I - \bm{L}_{22} & \\
			& \bm I - \bm{\Lambda}_{2}
		\end{bmatrix} 
		\Big\| =  \max \{  \max_i (1- \lambda_{2}(i))\nonumber \\
		&‌ \quad \quad \quad \quad \quad \quad , \max_i \Big(1- \frac{\lambda_{1}(i)}{\sqrt{\lambda_{1}^{2}(i)\cos^2 \mathbf{\theta}_{u}(i) + \sin^2 \mathbf{\theta}_{u}(i)}} \Big) \}, \nonumber \\
		&\|[\bm{L}_{11} \quad  \bm{L}_{12}]\|^2 = \max_i \Bigg\|\Bigg[
		\sqrt{ \lambda_{1}^{2} (i) \cos^2 \mathbf{\theta}_{u}(i) + \sin^2 \mathbf{\theta}_{u}(i)} \nonumber \\
		&\quad \quad  \quad \quad \quad \quad  \quad \quad \frac{(1-\lambda_{1}^{2}(i)) \cos \mathbf{\theta}_{u}(i) \sin \mathbf{\theta}_{u}(i)}{\sqrt{ \lambda_{1}^{2}(i)\cos^2 \mathbf{\theta}_{u}(i) + \sin^2 \mathbf{\theta}_{u}(i)}}
		\Bigg]\Bigg\|^2_{2} \nonumber \\
		&= \max_i \frac{\lambda_{1}^{4}(i) \cos^2 \mathbf{\theta}_{u}(i) + \sin^2 \mathbf{\theta}_{u}(i)}{\lambda_{1}^{2}(i) \cos^2 \mathbf{\theta}_{u}(i) + \sin^2 \mathbf{\theta}_{u}(i) } \nonumber \\
		& \| \bm L^{\prime} \|^{2} = \max \Big\{ \Big\| \begin{bmatrix}
			\bm{L}_{12} \\  \bm{L}_{22}-\bm I
		\end{bmatrix}
		\Big\|_2^2  , \| \bm{\Lambda}_2 - \bm I \|_2^2 \Big\}	
		\nonumber \\
		& =\max_i \{[ \max_i \Big(1- \frac{\lambda_{1}^2 (i)} {\lambda_{1}^{2}(i)\cos^2 u_i + \sin^2 u_i} + \nonumber \\
		&\quad \quad \frac{(1-\lambda_{1}(i))^2\cos^2 \mathbf{\theta}_{u}(i)\sin^2 \mathbf{\theta}_{u}(i)}{\lambda_{1}^{2}(i)\cos^2 \mathbf{\theta}_{u}(i) + \sin^2 \mathbf{\theta}_{u}(i)} \Big) , \max_i (\lambda_{2}(i)-1)^2] \}.
	\end{align}
	\bibliographystyle{ieeetr}
\bibliography{MyrefrenceMRadMC}
\vskip -2\baselineskip plus -1fil
\begin{IEEEbiography}[{\includegraphics[width=1in,height=1.25in,trim={0cm 0cm 0cm 0cm}]{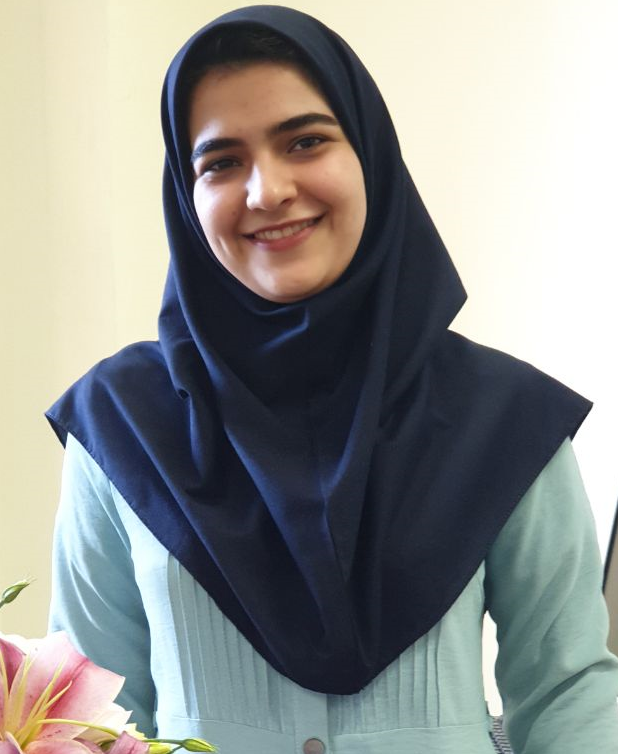}}]{\textbf{Hamideh Sadat Fazael Ardakani}}
received her B.Sc. degree in electronic engineering from Yazd university in 2017, Yazd, Iran and M.Sc. degree in communications engineering from Iran University of Science \& Technology, Tehran, Iran in 2020. She is currently pursuing her Ph.D. in the University of Tehran, Tehran, Iran. Her main research interests are matrix completion, compressed sensing and statistical signal processing.
\end{IEEEbiography}
\vskip -2\baselineskip plus -1fil
\begin{IEEEbiography}
	[{\includegraphics[width=1in,height=1.5in,clip,keepaspectratio]{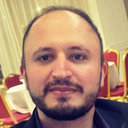}}]{\textbf{Sajad Daei}}
	received the B.Sc., degree in electronic engineering from Guilan University, Rasht, Iran, in 2011, the M.Sc. degree in communications engineering from Sharif University of Technology (SUT), Tehran, Iran, in 2013 and the Ph.D. degree in communications engineering from Iran University of Science \& Technology (IUST), Tehran, Iran in 2019. From 2020 to 2021, he was a research assistant in the electronic research institution of SUT. In 2020, he received the best Ph.D. thesis award of communication engineering and the outstanding Ph.D. thesis award of IEEE (Iran Section). He is currently a Postdoctoral researcher with EURECOM, Biot, France. His main research interests include optimization, inverse problems, compressed sensing and super resolution.
\end{IEEEbiography}
\vskip -2\baselineskip plus -1fil
\begin{IEEEbiography}[{\includegraphics[width=1in,height=1.25in,clip,keepaspectratio]{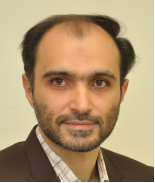}}]{\textbf{Farzan Haddadi}}was born in 1979. He received his B.Sc., M.Sc., and Ph.D. degrees in communication systems in 2001, 2003, and 2010, respectively, from Sharif University of Technology, Tehran, Iran. He joined Iran University of Science \& Technology faculty in 2011. His main research interests are array signal processing, statistical signal processing, subspace tracking, and compressed sensing.
\end{IEEEbiography}


\end{document}